\documentclass[runningheads,envcountsame]{llncs}

\newif\ifijcarversion 
\ijcarversiontrue

\usepackage[T1]{fontenc}
\usepackage{mathtools,stmaryrd,amssymb,amsfonts,nicefrac,bm}
\usepackage{subcaption}
\usepackage{hyperref}
\usepackage[capitalise,noabbrev]{cleveref}
\usepackage{cite}
\usepackage{enumitem}
\usepackage{subfiles}
\usepackage{proof}

\usepackage{tikzit}

\tikzstyle{gate}=[fill=white, draw=black, shape=rectangle, minimum height=0.43cm, minimum width=0.43cm, inner sep=0.1em]
\tikzstyle{control}=[fill=black, draw=black, shape=circle, scale=0.38]
\tikzstyle{not}=[shape=circle, path picture={ 
\draw[black](path picture bounding box.north) -- (path picture bounding box.south) (path picture bounding box.west) -- (path picture bounding box.east);
}, draw=black, scale=.8]
\tikzstyle{wcontrol}=[fill=white, draw=black, shape=circle, scale=0.38]
\tikzstyle{bwcontrol}=[draw=black, shape=circle, scale=0.38, path picture={
        \fill[white] (path picture bounding box.center) circle(0.5);
        \fill[black] (path picture bounding box.center) [radius=0.5] -- ++(0:0.5) arc[start angle=0, end angle=180, radius=0.5] -- cycle;
    }, rotate=45]
\tikzstyle{empty}=[fill=white, draw=black, shape=rectangle, inner sep=0.4em, emptyborder]
\tikzstyle{globalphase}=[fill=white, draw=black, inner sep=0.15em, shape=rounded rectangle, minimum height=0.4cm]
\tikzstyle{ancilla}=[fill=black, draw=black, shape=rectangle, minimum width=0.01cm, minimum height=0.25cm, inner sep=0.01em]
\tikzstyle{ground}=[fill=white, path picture={\draw[black](-1.5mm,0)--(-0.6mm,0);\draw[black,thick](-0.6mm,-1.75mm)--(-0.6mm,1.75mm) (0mm,-0.9mm)--(0mm,0.9mm) (0.6mm,-0.5mm)--(0.6mm,0.5mm);}, minimum width=0.1mm, draw=none, outer sep=0pt]
\tikzstyle{gate22}=[fill={rgb,255: red,220; green,220; blue,220}, draw=black, shape=rectangle, minimum height=.68cm, minimum width=0.6cm]
\tikzstyle{void}=[shape=rectangle, minimum height=0.5cm]
\tikzstyle{gate44}=[fill={rgb,255: red,220; green,220; blue,220}, draw=black, shape=rectangle, minimum height=1.43cm, minimum width=0.5cm]
\tikzstyle{divider}=[fill={rgb,255: red,220; green,220; blue,220}, draw=black, shape border rotate=90, regular polygon, regular polygon sides=3, inner sep=1.5pt, rounded corners=0.5mm]
\tikzstyle{gatherer}=[fill={rgb,255: red,220; green,220; blue,220}, draw=black, shape border rotate=-90, regular polygon, regular polygon sides=3, inner sep=1.5pt, rounded corners=0.5mm]
\tikzstyle{hyperedge}=[fill=white, draw=black, shape=rectangle, rounded corners=0.1cm, minimum height=.6cm, minimum width=.6cm]
\tikzstyle{square}=[fill=white, draw=black, shape=rectangle, minimum height=0.20cm, minimum width=0.20cm, inner sep=0.1em, thick]
\tikzstyle{gphase}=[rounded rectangle, rounded rectangle arc length=120, fill={zx_grey}, inner sep=2pt, font={\tiny\boldmath}, label distance=1mm, fill opacity=.8, text opacity=1, tikzit category=ZX]
\tikzstyle{customcontrol}=[fill=white, draw=black, inner sep=0.1em, shape=rounded rectangle, minimum height=0.2cm]
\tikzstyle{whiteancilla}=[fill=white, draw=black, shape border rotate=-90, regular polygon, regular polygon sides=3, inner sep=1.5pt, rounded corners=0.2mm]
\tikzstyle{blackancilla}=[fill=black, draw=black, shape border rotate=-90, regular polygon, regular polygon sides=3, inner sep=1.5pt, rounded corners=0.2mm]
\tikzstyle{greyancilla}=[fill={rgb,255: red,150; green,150; blue,150}, draw=black, shape border rotate=-90, regular polygon, regular polygon sides=3, inner sep=1.5pt, rounded corners=0.2mm]
\tikzstyle{whiteancillaterm}=[fill=white, draw=black, shape border rotate=90, regular polygon, regular polygon sides=3, inner sep=1.5pt, rounded corners=0.2mm]
\tikzstyle{blackancillaterm}=[fill=black, draw=black, shape border rotate=90, regular polygon, regular polygon sides=3, inner sep=1.5pt, rounded corners=0.2mm]

\tikzstyle{emptyborder}=[-, dash pattern=on 0.16em off 0.16em on 0.16em off 0.16em on 0.16em off 0em]
\tikzstyle{etc}=[-, draw=black, densely dashed, thick]
\tikzstyle{greyetc}=[-, draw={rgb,255: red,161; green,161; blue,161}, densely dashed, thick]
\tikzstyle{dots}=[-, dotted, draw=black, thick]
\tikzstyle{big}=[-, thick]
\tikzstyle{register}=[-, double]
\tikzstyle{grey}=[-, draw={rgb,255: red,161; green,161; blue,161}]
\tikzstyle{border}=[-, fill=white]

\input{qc.tikzdefs}
\newcommand{\tf}[1]{\scalebox{0.77}{\input{#1.tikz}}}

\newcommand{\smalltf}[1]{\scalebox{0.72}{\input{#1.tikz}}}
\newcommand{\tinytf}[1]{\scalebox{0.64}{\input{#1.tikz}}}

\usepackage{xcolor}

\hypersetup{
  colorlinks,
  linkcolor={blue!80!black},
  citecolor={blue!80!black},
  urlcolor={blue!80!black}
}

\allowdisplaybreaks

\newcommand{\labelcrefor}[1]{%
  \begingroup
  \renewcommand{\crefpairconjunction}{ or }%
  \renewcommand{\crefmiddleconjunction}{, }%
  \renewcommand{\creflastconjunction}{ or }%
  \labelcref{#1}%
  \endgroup
}

\newcommand{\N}{\mathbb{N}}

\newcommand{\R}{\mathbb{R}}

\newcommand{\natset}[1]{\{1,\dots,#1\}}

\newcommand{\defeq}{\coloneqq}

\newcommand{\semicolon}{\fatsemi}
\newcommand{\interp}[1]{\llbracket #1 \rrbracket}

\newcommand{\cat}[1]{\textbf{\textup{#1}}}

\newcommand{\rewrite}[1]{\longrightarrow_{#1}}
\newcommand{\mrewrite}[1]{\longrightarrow_{#1}^*}

\newcommand{\bothmrewrite}[1]{\longleftrightarrow_{#1}^*}
\newcommand{\notrewrite}[1]{\,\not\!\!\longrightarrow_{#1}}
\newcommand{\first}{\raisebox{1pt}{\scalebox{0.6}{$\#$}}}
\newcommand{\emptylist}{[\,]}
\newcommand{\conc}{\mathbin{\raisebox{-1pt}{\textup{@}}}}
\newcommand{\smallconc}{\,\mathbin{\raisebox{-0.5pt}{\scalebox{0.7}{$\textup{@}$}}}\,}
\newcommand{\listplus}{+_{\!\!\cdot}}
\newcommand{\listminus}{-_{\!\!\cdot}}

\newcommand{\terms}[1]{\langle#1\rangle}
\newcommand{\contexts}[3]{\langle#1\rangle_{#2,#3}}
\newcommand{\ppcontexts}[3]{\langle#1\rangle^{\smash{\textup{pp}}}_{\smash{#2,#3}}}
\newcommand{\pp}[1]{\langle#1\rangle^{\textup{pp}}}
\newcommand{\nf}[1]{\langle#1\rangle^{\textup{nf}}}
\newcommand{\cf}[1]{\langle#1\rangle^{\textup{cf}}}
\newcommand{\layer}[3]{L^{\smash{#1}}_{\smash{#2}}(#3)}
\newcommand{\layers}[3]{\vec{L}^{\smash{#1}}_{\smash{#2}}(#3)}
\newcommand{\clayers}[3]{\vec{C}^{\smash{#1}}_{\smash{#2}}(#3)}
\newcommand{\ceta}{\textsf{C\kern-0.15exe\kern-0.45exT\kern-0.45exA}}

\title{Towards Term-based Verification of Diagrammatic Equivalence}
\titlerunning{Towards Term-based Verification of Diagrammatic Equivalence}
\author{Julie Cailler\inst{1} \and 
Noé Delorme\inst{1} \and 
Simon Perdrix\inst{1} \and 
Sophie Tourret\inst{1,2}}

\authorrunning{J. Cailler, N. Delorme, S. Perdrix, S. Tourret}

\institute{Université de Lorraine, CNRS, INRIA, LORIA, Nancy, France\\ \and Max Planck Institute for Informatics, Saarbrücken, Germany
\email{firstname.lastname@inria.fr}}

\begin{document}

\maketitle

\begin{abstract}
  A string diagram is a two-dimensional graphical representation that can be described as a one-dimensional term generated from a set of primitives using sequential and parallel compositions. Since different syntactic terms may represent the same diagram, this syntax is quotiented by a collection of coherence equations expressing equivalence up to deformation. This work lays foundations for automated reasoning about diagrammatic equivalence, motivated primarily by the verification of quantum circuit equivalences. We consider two classes of diagrams, for which we introduce normalizing term rewriting systems that equate diagrammatically equivalent terms. In both cases, we prove termination and confluence with the help of the proof assistant Isabelle/HOL. 
\end{abstract}


\section{Introduction}

Diagrammatic reasoning has proved to be a powerful and expressive framework across a wide range of domains, including quantum computing \cite{CD08,jeandel2018complete,zw,backens2018zh,AC04,qccompleteness}, linear algebra \cite{sobocinski2015graphical,bonchi2014interacting}, boolean circuits \cite{categorytof}, and natural language processing \cite{coecke2021mathematics,zeng2016quantum}. By representing computations as compositional diagrams in two dimensions, these formalisms offer intuitive and structural insights that are difficult to witness otherwise. As diagrammatic methods continue to improve, the need for automated reasoning techniques expands. In particular, problems such as equivalence checking, verification, certification, and diagrammatic manipulation require algorithmic foundations in order to scale.

An important motivation for our work arises from quantum computing, a field in which diagrammatic reasoning has been extensively developed \cite{qccompleteness,qcextensions,qcminimality,controlledprop}. In particular, quantum circuits admit natural diagrammatic representations, and reasoning about their equivalence is essential for tasks such as optimization, compilation, and verification. Our work is motivated by the goal of obtaining semi-automated verification techniques for quantum circuit equivalences. In this paper, we establish foundational results, in the form of normal forms and verified rewriting systems, that constitute essential building blocks towards a full-fledged proof certification pipeline for quantum circuits.

One way to represent diagrammatic reasoning problems suitably for automation is to adopt a syntactic term-based representation of diagrams. By encoding diagrams as terms equipped with operations for sequential and parallel compositions, diagrammatic deformations can be translated into equations between terms. This enables the use of term rewriting to prove diagram equivalences, including standard techniques for proving termination and confluence, for which there exist automated tools \cite{aprove,ceta,maud,egglog}. As a result, the term-based representation provides a natural framework for certifying diagrammatic equivalence in a rigorous and machine-checkable way.

The natural mathematical framework to encode diagrams as terms is monoidal categories, and more precisely PROs and PROPs \cite{selingerbible}. Both structures are similar, with one distinction: a PROP additionally allows swapping wires. Formally, the terms are freely generated from a set of generators---corresponding to black box processes with a certain number of input and output wires---and can be composed in two ways: in sequence with the operator $\semicolon$ and in parallel with the operator $\otimes$. Depending on the order in which the generators are syntactically composed, multiple terms may represent the same diagrammatic evolution. For instance, $(u_1\semicolon u_2)\otimes(v_1\semicolon v_2)$ and $(u_1\otimes v_1)\semicolon(u_2\otimes v_2)$ can both be graphically depicted as follows.
\begin{equation*}
    \tf{interchange}
\end{equation*}

Hence, the syntax of terms is quotiented by a collection of \emph{coherence equations} capturing the intuitive notion of diagram equivalence. These equations can be turned into rewrite rules, allowing the rewrite of equivalent terms.

\paragraph{Contributions.}
We focus on two diagrammatic equivalence problems. In the first one, we consider arbitrary state-and-effect-free PROs, meaning that all generators have at least one input wire and one output wire. We define a normal form together with an associated normalizing rewriting system, and prove termination and confluence. Then, any pair of diagrammatically equivalent terms can be rewritten one into the other using the normalization procedure. In the second problem, we consider the PRO of permutations, which is precisely the PROP with no generators. Building upon our previous normal form, we define a canonical form and a canonizing rewriting system that we also prove to be terminating and confluent. This ensures the existence and uniqueness of the canonical form for any permutation. In both cases, the key steps of the termination and confluence proofs are verified using the proof assistant Isabelle/HOL. 

\paragraph{Related Work.}
Diagrammatic equivalence has also been studied with a fundamentally different approach: instead of working with a syntactic term representation, diagrams are encoded as hypergraphs and rewriting is studied at the level of graph structures \cite{stringdiagramtheory}. Then, diagrammatic equivalence is witnessed by hypergraph isomorphism. In contrast, our work remains entirely in the term-based setting and directly equips terms with rewriting systems yielding explicit normal forms that we plan to use for certification. 
Related questions have also been investigated in the context of the ZX-calculus, notably with the tool VyZX \cite{vyzx}. While sharing the same overall goal of certifying diagrammatic equivalences, their approach focuses on filling the gap between the use of inductive datatypes in proof assistants and the use of graph representation in ZX-calculus. Diagrammatic equivalence in the PRO of permutations has already been addressed, but from a diagrammatic perspective, where equivalence is handled implicitly via graphical transformation \cite{Lafont2003}. Although the resulting normal form is closely related in shape to ours, this work remains purely graphical and does not account for the syntactic order of sequential and parallel compositions.

\paragraph{Outline of the Paper.}
In \cref{sec:background}, we fix the notations and give term-based definitions of PROs and PROPs. In \cref{sec:pro}, we introduce the normal form for an arbitrary state-and-effect-free PRO and prove the termination and confluence of its underlying term rewriting system. In \cref{sec:perm}, we introduce the canonical form for the PRO of permutations and prove the termination and confluence of its underlying term rewriting system. In \cref{sec:conclusion}, we provide our concluding remarks. The Isabelle companion files are available on Zenodo \cite{delorme_2026_20294284}.


\section{Background}\label{sec:background}

The standard mathematical framework for diagrammatic reasoning is the one of \emph{monoidal categories}. In particular, circuit-based graphical languages---such as quantum circuits---are defined as a particular kind of monoidal categories, namely PROs or PROPs (which stands for \emph{PROduct and Permutation categories}). A PRO is a strict monoidal category generated by a single object, and a PROP is additionally symmetric (meaning that there are swap primitives allowing wires to be exchanged). For non-category theorists, we provide self-contained definitions of PROs and PROPs in this section. Roughly speaking, these structures are generated from a set of generators---corresponding to basic gates---via sequential and parallel composition. Graphically, this amounts to representing the generators as boxes and connecting them with wires.

\subsection{Terms and their Diagrammatic Representation}

Let $\Sigma$ be a set of generators equipped with two functions $\textup{dom},\textup{cod}:\Sigma\to\N$. We inductively define the set of terms $\terms{\Sigma}$ induced by $\Sigma$, and extend $\textup{dom},\textup{cod}:\terms{\Sigma}\to\N$ as follows, using the notation $t:n\to m\in \terms{\Sigma}$ (or simply $t:n\to m$ if $\Sigma$ is clear from the context) when $t\in \terms{\Sigma}$,  $\textup{dom}(t)=n$ and $\textup{cod}(t)=m$.
\begin{gather*}
    \infer{g:\textup{dom}(g)\to \textup{cod}(g) \in\terms{\Sigma}}{g\in\Sigma} \hspace{3em}
    \infer{\textup{id}_n:n\to n\in\terms{\Sigma}}{n\in\N} \\[0.2em]
    \infer{(u\semicolon v):n\to m\in\terms{\Sigma}}{u:n\to p \in\terms{\Sigma} & v:p\to m\in\terms{\Sigma}} \hspace{3em}
    \infer{(u\otimes v):n+p\to m+q\in\terms{\Sigma}}{u:n\to m\in\terms{\Sigma}&v:p\to q\in\terms{\Sigma}}
\end{gather*}

The \textit{domain} and \textit{codomain} of a term $t\in\terms{\Sigma}$ are given by $\textup{dom}(t)$ and $\textup{cod}(t)$, representing its number of inputs and outputs, respectively. We define the \emph{defect} of $t$ by $\textup{def}(t)\defeq \textup{dom}(t)-\textup{cod}(t)$. A term $t$ is called a \emph{state} if $\textup{dom}(t)=0$, and an \emph{effect} if $\textup{cod}(t)=0$. Moreover, for each $t\in\terms{\Sigma}$, let $\textup{gen}(t)\subseteq\Sigma$ denote the set of generators appearing in $t$. This set is defined inductively by $\textup{gen}(\textup{id}_n)=\varnothing$, $\textup{gen}(g)=\{g\}$ for $g\in\Sigma$, and $\textup{gen}(u\semicolon v)\defeq \textup{gen}(u\otimes v)\defeq \textup{gen}(u)\cup\textup{gen}(v)$.

A common set of generators is the set of swaps $\Sigma_0\defeq\{\sigma_{n,m}:n,m\in\N\}$ with $\textup{dom}(\sigma_{n,m})=\textup{cod}(\sigma_{n,m})= n+m$. Intuitively, $\sigma_{n,m}$ represents a diagram that exchanges two registers consisting of $n$ and $m$ wires, respectively. Terms admit a natural graphical representation as string diagrams. A term $t:n\to m$ is depicted as a box with $n$ input wires and $m$ output wires; the identity $\textup{id}_n$ is depicted as $n$ parallel wires; and the swap $\sigma_{n,m}$ is depicted as a swap of the $n$ and $m$ wires. Here are some examples.
\begin{gather*}
    \begin{array}{c}
        \textup{id}_0:0\to 0 \\[0.2em] \smalltf{empty}
    \end{array} \hspace{2em}
    \begin{array}{c}
        \textup{id}_1:1\to 1 \\[0.2em] \smalltf{id1}
    \end{array} \hspace{2em}
    \begin{array}{c}
        \textup{id}_2:2\to 2 \\[0.2em] \smalltf{id2}
    \end{array} \hspace{2em}
    \begin{array}{c}
        \sigma_{1,1}:2\to 2 \\[0.2em] \smalltf{sigma11}
    \end{array} \hspace{2em}
    \begin{array}{c}
        \sigma_{2,1}:3\to 3 \\[0.2em] \smalltf{sigma21}
    \end{array} \\[0.4em]
    \begin{array}{c}
        u:1\to 2 \\[0.2em] \smalltf{u12}
    \end{array} \hspace{2em}
    \begin{array}{c}
        v:2\to 2 \\[0.2em] \smalltf{v22}
    \end{array} \hspace{2em}
    \begin{array}{c}
        u\semicolon v:1\to 2 \\[0.2em] \smalltf{sequ12v22}
    \end{array} \hspace{2em}
    \begin{array}{c}
        u\otimes v:3\to 4 \\[0.2em] \smalltf{paru12v22}
    \end{array}
\end{gather*}

Moreover, given $r,s\in\N$, the set of contextualized terms $\contexts{\Sigma}{r}{s}$ is defined as the terms containing a single occurrence of an additional generator $[\cdot]:r\to s$ called a \emph{hole}. Then, given a contextualized term $K[\cdot]:n\to m\in\contexts{\Sigma}{r}{s}$ and a term $t:r\to s\in\terms{\Sigma}$, the substitution $K[t]:n\to m\in\terms{\Sigma}$ is the term where the unique hole $[\cdot]$ in $K[\cdot]$ has been replaced by $t$. 

\subsection{Diagrammatic Equivalence}

Up to here, terms are purely syntactic, and two terms are only equal if they are constructed exactly in the same way. However, several terms may be represented by the same diagram. Thus, we need the equivalence equations known as the \emph{coherence equations}, depicted in \cref{fig:coherencelaws}, to precisely capture the intuitive notion of diagram deformation, which yields our definition of PROs and PROPs.

\begin{figure}[t]
    \fbox{\begin{minipage}{0.982\linewidth}\begin{center}
        \vspace{-1em}
        \hspace{-0.8em}
        \begin{subfigure}{0.48\textwidth}
            \begin{equation}\label{eq:seqidentity}
                \begin{array}{ccccc}
                    \textup{id}_n\semicolon t&=&t&=&t\semicolon\textup{id}_m\\[.4em]
                    \tinytf{seqidentity_left}&=&\tinytf{seqidentity_mid}&=&\tinytf{seqidentity_right}
                \end{array}
            \end{equation}
        \end{subfigure}

        \vspace{-0.5em}
        \begin{subfigure}{0.44\textwidth}
            \begin{equation}\label{eq:seqassociativity}
                \begin{array}{ccc}
                    (t\semicolon u)\semicolon v&=&t\semicolon(u\semicolon v)\\[.4em]
                    \tinytf{seqassociativity_left}&=&\tinytf{seqassociativity_right}
                \end{array}
            \end{equation}
        \end{subfigure}\hspace{2em}
        \begin{subfigure}{0.38\textwidth}
            \begin{equation}\label{eq:parassociativity}
                \begin{array}{ccc}
                    (t\otimes u)\otimes v&=&t\otimes(u\otimes v)\\[.4em]
                    \tinytf{parassociativity_left}&=&\tinytf{parassociativity_right}
                \end{array}
            \end{equation}
        \end{subfigure}

        \vspace{-0.5em}
        \begin{subfigure}{0.30\textwidth}
            \begin{equation}\label{eq:idinduct}
                \begin{array}{ccc}
                    \textup{id}_{n+m}&=&\textup{id}_{n}\otimes\textup{id}_{m}\\[.4em]
                    \tinytf{idinduct_left}&=&\tinytf{idinduct_right}
                \end{array}
            \end{equation}
        \end{subfigure}\hspace{2em}
        \begin{subfigure}{0.40\textwidth}
            \begin{equation}\label{eq:paridentity}
                \begin{array}{ccccc}
                    t\otimes \textup{id}_0&=&t&=&\textup{id}_0\otimes t\\[.4em]
                    \tinytf{paridentity_left}&=&\tinytf{paridentity_mid}&=&\tinytf{paridentity_right}
                \end{array}
            \end{equation}
        \end{subfigure}

        \vspace{-0.5em}
        \begin{subfigure}{0.57\textwidth}
            \begin{equation}\label{eq:interchange}
                \begin{array}{ccc}
                    (u_1\semicolon u_2)\otimes(v_1\semicolon v_2)&=&(u_1\otimes v_1)\semicolon(u_2\otimes v_2)\\[.4em]
                    \tinytf{interchange_left}&=&\tinytf{interchange_right}
                \end{array}
            \end{equation}
        \end{subfigure}\hspace{1em}
        \begin{subfigure}{0.36\textwidth}
            \begin{equation}\label{eq:emptysymmetry}
                \begin{array}{ccccc}
                    \sigma_{n,0}&=&\textup{id}_n&=&\sigma_{0,n}\\[.4em]
                    \tinytf{emptysymmetry_left}&=&\tinytf{emptysymmetry_mid}&=&\tinytf{emptysymmetry_right}
                \end{array}
            \end{equation}
        \end{subfigure}

        \vspace{-0.5em}
        \begin{subfigure}{0.34\textwidth}
            \begin{equation}\label{eq:involution}
                \begin{array}{ccc}
                    \sigma_{n,m}\semicolon\sigma_{m,n}&=&\textup{id}_{n+m}\\[.4em]
                    \tinytf{involution_left}&=&\tinytf{involution_right}
                \end{array}
            \end{equation}
        \end{subfigure}\hspace{2em}
        \begin{subfigure}{0.48\textwidth}
            \begin{equation}\label{eq:naturality}
                \begin{array}{ccc}
                    (t\otimes\textup{id}_k)\semicolon\sigma_{m,k}&=&\sigma_{n,k}\semicolon(\textup{id}_k\otimes t)\\[.4em]
                    \tinytf{naturality_left}&=&\tinytf{naturality_right}
                \end{array}
            \end{equation}
        \end{subfigure}

        \vspace{-0.5em}
        \begin{subfigure}{0.86\textwidth}
            \begin{equation}\label{eq:symmetryinduct}
                \begin{array}{ccc}
                    \sigma_{n+k,m+\ell}&=&(\textup{id}_n\otimes\sigma_{k,m}\otimes\textup{id}_\ell)\semicolon(\sigma_{n,m}\otimes\sigma_{k,\ell})\semicolon(\textup{id}_m\otimes\sigma_{n,\ell}\otimes\textup{id}_k)\\[.4em]
                    \tf{symmetryinduct_left}&=&\tf{symmetryinduct_right}
                \end{array}
            \end{equation}
        \end{subfigure}
        \vspace{0.2em}
    \end{center}\end{minipage}}
    \caption{\label{fig:coherencelaws} \cref{eq:seqidentity,eq:seqassociativity,eq:parassociativity,eq:idinduct,eq:paridentity,eq:interchange} are the coherence equations of PROs. \cref{eq:seqidentity,eq:seqassociativity,eq:parassociativity,eq:idinduct,eq:paridentity,eq:interchange,eq:emptysymmetry,eq:involution,eq:naturality,eq:symmetryinduct} are the coherence equations of PROPs. Dotted boxes and wires highlight the correspondence with the syntax. For convenience, we can draw a single wire to represent multiple wires when it is clear from the context.}
\end{figure}

The PRO freely generated by the set of generators $\Sigma$, denoted $\cat{PRO}_\Sigma$, is defined as the collection of equivalence classes of terms $\terms{\Sigma}$ under \cref{eq:seqidentity,eq:seqassociativity,eq:parassociativity,eq:idinduct,eq:paridentity,eq:interchange}. Furthermore, given a set of equations $\mathcal{E}$ between terms, let $\cat{PRO}_{\Sigma,\mathcal{E}}$ be the PRO whose equivalence classes are the ones of $\cat{PRO}_\Sigma$ additionally quotiented by the equations of $\mathcal{E}$. The equivalence relation is denoted $\cat{PRO}_{\Sigma,\mathcal{E}}\vdash t_1=t_2$ when $t_1$ can be rewritten into $t_2$ using \cref{eq:seqidentity,eq:seqassociativity,eq:parassociativity,eq:idinduct,eq:paridentity,eq:interchange} together with the equations in $\mathcal{E}$. The equivalence relation is closed under context, meaning that if $\cat{PRO}_{\Sigma,\mathcal{E}}\vdash t_1=t_2$ where $t_i:r\to s$, then $\cat{PRO}_{\Sigma,\mathcal{E}}\vdash K[t_1]=K[t_2]$ for any contextualized term $K[\cdot]\in\contexts{\Sigma}{r}{s}$.

The PROP freely generated by the set of generators $\Sigma$, denoted $\cat{PROP}_\Sigma$, is defined as $\cat{PROP}_\Sigma\defeq\cat{PRO}_{\Sigma_0\cup\Sigma,\mathcal{E}_0}$ where $\mathcal{E}_0$ is the set containing \cref{eq:emptysymmetry,eq:involution,eq:naturality,eq:symmetryinduct}. That is, the terms are induced by $\Sigma$ as well as the set of swaps $\Sigma_0$, and are quotiented by \cref{eq:seqidentity,eq:seqassociativity,eq:parassociativity,eq:idinduct,eq:paridentity,eq:interchange,eq:emptysymmetry,eq:involution,eq:naturality,eq:symmetryinduct} to form the equivalence class of $\cat{PROP}_\Sigma$. Intuitively, compared to PROs, PROPs can bend the wires and gates can slide along those bends.

In this paper, we focus on two diagrammatic equivalence problems. For the first one (\cref{sec:pro}), we consider an arbitrary \emph{state-and-effect-free} PRO, meaning that the set of generators is arbitrary but contains no states and no effects. In the second one (\cref{sec:perm}), we consider the PRO of permutations $\cat{PRO}_{\Sigma_0,\mathcal{E}_0}$, which is precisely the simplest PROP that one can think about, namely $\cat{PROP}_\varnothing$.

\subsection{Notations}


A list of $p$ elements of a set $A$ is denoted $xs=[xs_1,\dots,xs_p]\in A^p$. The list with $0$ elements is denoted $\emptylist\in A^0$. One can plug $x\in A$ to the left of $xs\in A^p$ with the operator $\first$ defined as $x\first xs\defeq [x,xs_1, \ldots ,xs_p]\in A^{p+1}$. More generally, one can define the concatenation of two lists $xs\in A^p$ and $ys\in A^q$ with the operator $\conc$ defined as $xs\conc ys\defeq [xs_1,\ldots, xs_p, ys_1, \ldots ,ys_q]\in A^{p+q}$. If $(A,+)$ is a group, given $xs\in A^p$ and $x\in A$, we define $xs\listplus x\defeq[xs_1+x,\dots,xs_p+x]$ and $xs\listminus x\defeq xs\listplus (-x)$.

Term rewriting systems (TRS) are defined as usual \cite{trs}. In particular, given two terms $t_1$ and $t_2$, we write $t_1\rewrite{\mathcal{R}} t_2$ when $t_1$ can be rewritten into $t_2$ using exactly one step of the rewriting system $\mathcal{R}$, and $t_1\mrewrite{\mathcal{R}} t_2$ when $t_1$ can be rewritten into $t_2$ using any finite number (including zero) of steps of $\mathcal{R}$. We write $t\notrewrite{\mathcal{R}}$ when no rules of $\mathcal{R}$ can be applied from the term $t$.


\section{Term Rewriting in a State-and-effect-free PRO}\label{sec:pro}

In this section we fix $\Sigma$ a state-and-effect-free set of generators, meaning that each generator $g\in\Sigma$ satisfies $\textup{dom}(g)>0$ and $\textup{cod}(g)>0$. This assumption enables termination and confluence of our TRS while remaining sufficiently general for a wide range of practical applications. We consider $\cat{PRO}_\Sigma$ and solve its underlying diagrammatic equivalence problem: given two terms $t_1,t_2\in\terms{\Sigma}$ such that $\cat{PRO}_\Sigma\vdash t_1=t_2$, how to effectively rewrite $t_1$ into $t_2$ using the coherence equations? To do so, we introduce a normal form and an associated rewriting system that we prove to be terminating and confluent.

\subsection{Normal Form and Preprocessing}

We define a \emph{pre-normal form} as a sequence of layers, a layer being a single generator $g$ sandwiched between identities: $\layer{k}{\ell}{g} \defeq \textup{id}_{k}\otimes (g\otimes\textup{id}_{\ell})$ where $k,\ell\in \mathbb N$ and $g\in \Sigma$. We say that $g$ has  \emph{upper width}  $k$, \emph{lower width} $\ell$, and \emph{height} $k+\text{dom}(g)$. 

A term $\layers{ks}{\ell s}{gs}$, defined as follows, is a non-empty sequence of $p>0$ layers, characterized by the list $gs\in\Sigma^p$  of the associated generators and the lists $ks,\ell s\in\N^p$ of their corresponding upper  and lower widths. 
\begin{gather*}
    \layers{k\first ks}{\ell\first \ell s}{g\first gs} \defeq \begin{cases} 
        \layer{k}{\ell}{g} &\text{if $ks=\ell s=gs = \emptylist$}\\
        \layer{k}{\ell}{g}\semicolon \layers{ks}{\ell s}{gs} &\text{otherwise}
    \end{cases}
\end{gather*}

Notice that to guarantee that $\layers{ks}{\ell s}{gs}$ is well-defined, we assume throughout this paper that the three lists $ks$, $\ell s$ and $gs$ have the same size $p$ and that $\forall i\in \natset{p-1}$, $ks_i+\textup{cod}(gs_i) + \ell s_i = ks_{i+1}+\textup{dom}(gs_{i+1}) + \ell s_{i+1}$, to ensure that the layers have appropriate domains and codomains. 

Notice that, according to the coherence equations, two layers in sequence may commute when the outputs of the first generator are not connected to any inputs of the second generator. To obtain a \emph{normal form}, we must fix an order for the commuting layers. To do so, we require the upper width of a generator to be smaller than the height of the next one, since the two layers can be commuted otherwise. This leads to the following definition of normal forms.

\begin{definition}[normal form]\label{def:nf}
    The set $\nf{\Sigma}$ of terms in \emph{normal form} is inductively defined as follows: $\textup{id}_n$ and $\layer{k}{\ell}{g}$ are in normal form ; and $\layers{k\first ks}{\ell\first \ell s}{g\first gs}$ is in normal form if $\layers{ks}{\ell s}{gs}$ is in normal form while satisfying $k<ks_1+\textup{dom}(gs_1)$.
\end{definition}

\begin{example} 
    For instance, if $\Sigma=\{A:1\to1,B:2\to2\}$, the term $t=((A\semicolon\textup{id}_1)\otimes ((A\otimes A)\semicolon B))\semicolon(B\otimes A)\in\terms{\Sigma}$ is equivalent to $\layers{[0,1,2,1,0,2]}{[2,1,0,0,1,0]}{[A,A,A,B,B,A]}$ which is in normal form. These terms can be depicted as follows.
    \begin{gather*}
        \tf{nf_example_right} = \tf{nf_example_left}
    \end{gather*}
\end{example}

We introduce a procedure to turn any term into its normal form, relying on two new operators $\bullet$ and $\star$ allowing to compute the resulting normal form of a sequential and a parallel composition of two normal forms respectively. While the definition of $\star$ is straightforward, the definition of $\bullet$ requires mixing the layers of the two normal forms in order to satisfy the conditions explained in \cref{def:nf}.
\begin{definition}[normalizing map]\label{def:nfmap}
    Let ${\textup{NF}:\terms{\Sigma}\to \nf{\Sigma}}$ be the \emph{normalizing map} that assigns to each term $t:n\to m$ its normal form $\textup{NF}(t) : n\to m$, inductively defined as
    \vspace{-0.5em}
    \begin{gather*}
        \textup{NF}(\textup{id}_n) \defeq \textup{id}_n \hspace{3em}
        \textup{NF}(g) \defeq \layer{0}{0}{g} \\
        \textup{NF}(u\semicolon v) \defeq \textup{NF}(u)\bullet\textup{NF}(v) \hspace{3em}
        \textup{NF}(u\otimes v) \defeq \textup{NF}(u)\star\textup{NF}(v)
    \end{gather*}
    where $\bullet$ and $\star$ are inductively defined as follows, where $c=\textup{cod}(\layers{ks}{\ell s}{gs})$ and $d = \textup{dom}(\layers{rs}{ss}{hs})$.
    \begin{align*}
        \textup{id}_n \bullet \textup{id}_n&\defeq  \textup{id}_n \\[0.2em]
        \textup{id}_n \bullet \layers{ks}{\ell s}{gs} &\defeq \layers{ks}{\ell s}{gs} \\[0.2em]
        \layers{ks}{\ell s}{gs}  \bullet  \textup{id}_n&\defeq \layers{ks}{\ell s}{gs} \\[0.2em]
        \layer{k'}{\ell'}{g'}\bullet \layers{k\first ks}{\ell\first \ell s}{g\first gs}&\defeq \begin{cases}
            \layers{k'\first k\first ks}{\ell'\first \ell\first \ell s}{g'\first g\first gs} \;\text{if $k'<k+\textup{dom}(g)$}\\
            \layer{k}{\ell+\textup{def}(g')}{g}\semicolon \layer{k'{-}\textup{def}(g)}{\ell'}{g'} \; \text{else if}\; ks=\ell s=gs=\emptylist\\
            \layer{k}{\ell+\textup{def}(g')}{g}\semicolon (\layer{k'{-}\textup{def}(g)}{\ell'}{g'}\bullet \layers{ks}{\ell s}{gs}) \; \text{otherwise}
        \end{cases}\\[0.2em]
        \layers{k\first ks}{\ell \first \ell s}{g \first gs}\bullet \layers{rs}{ss}{hs}&\defeq \layer{k}{\ell}{g}\bullet   (\layers{ks}{\ell s}{gs}\bullet \layers{rs}{ss}{hs})\\[0.2em]
        \textup{id}_n \star \textup{id}_m &\defeq \textup{id}_{n+m}\\[0.2em]
        \textup{id}_n \star \layers{ks}{\ell s}{gs} &\defeq \layers{ks\listplus n}{\ell s}{gs} \\[0.2em]
        \layers{ks}{\ell s}{gs}  \star  \textup{id}_n&\defeq \layers{ks}{\ell s\listplus n}{gs} \\[0.2em]
        \layers{ks}{\ell s}{gs}\star \layers{rs}{ss}{hs}&\defeq \layers{ks\smallconc (rs\listplus c)}{(\ell s\listplus d )\smallconc  ss}{gs \conc hs}
    \end{align*}
\end{definition}

Graphically, the notion of diagrammatic equivalence in PROs is known as \emph{planar isotopy} \cite[Theorem~3.1]{selingerbible}, and, roughly speaking, means that boxes can be moved horizontally without allowing wires to cross each other. To avoid graph-theoretic technicalities and keep the focus on term rewriting, we work under the assumption that the normalizing map $\textup{NF}$ yields exactly the notion of diagrammatic equivalence in PROs without showing that it relates to planar isotopy.

As stated in the following lemma, the normalizing map's definition ensures that all normal forms are fixed points, meaning that normalizing a normal form does not change the normal form.
\begin{lemma}\label{lem:nfnfterm}
    $\textup{NF}(t)=t$ for any $t\in \nf{\Sigma}$.
\end{lemma}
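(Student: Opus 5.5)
Since $\nf{\Sigma}$ is defined inductively, the natural approach is induction along that definition. The cases $t=\textup{id}_n$ and $t=\layer{k}{\ell}{g}$ are handled directly; the inductive case $t=\layers{k\first ks}{\ell\first\ell s}{g\first gs}$ with $ks\neq\emptylist$ uses the hypothesis that $\textup{NF}(\layers{ks}{\ell s}{gs})=\layers{ks}{\ell s}{gs}$ together with the constraint $k<ks_1+\textup{dom}(gs_1)$. To make the induction go through, I would first prove a single-layer lemma:
\[
\textup{NF}(\layer{k}{\ell}{g})=\layer{k}{\ell}{g}\quad\text{for all }k,\ell\in\N,\ g\in\Sigma .
\]

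\textbf{Single-layer lemma.} Here $\layer{k}{\ell}{g}=\textup{id}_k\otimes(g\otimes\textup{id}_\ell)$, so
\[
\textup{NF}(\layer{k}{\ell}{g})=\textup{id}_k\star(\layer{0}{0}{g}\star\textup{id}_\ell).
\]
By the clause $\layers{ks}{\ell s}{gs}\star\textup{id}_n$, the inner product is $\layers{[0]}{[\ell]}{[g]}=\layer{0}{\ell}{g}$. By the clause $\textup{id}_n\star\layers{ks}{\ell s}{gs}$, the outer product is $\layers{[k]}{[\ell]}{[g]}=\layer{k}{\ell}{g}$. The identity case is immediate from $\textup{NF}(\textup{id}_n)=\textup{id}_n$. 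One small point to check is that a singleton list $\layers{[k]}{[\ell]}{[g]}$ is syntactically $\layer{k}{\ell}{g}$, which holds by the first case of the definition of $\vec{L}$.

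\textbf{Inductive case.} Write $t=\layer{k}{\ell}{g}\semicolon\layers{ks}{\ell s}{gs}$, which is how $\vec L$ unfolds when $ks\neq\emptylist$. Then
\[
\textup{NF}(t)=\textup{NF}(\layer{k}{\ell}{g})\bullet\textup{NF}(\layers{ks}{\ell s}{gs})=\layer{k}{\ell}{g}\bullet\layers{ks}{\ell s}{gs},
\]
using the single-layer lemma and the induction hypothesis. Next, unfold $\layers{ks}{\ell s}{gs}$ as $\layers{ks_1\first ks'}{\ell s_1\first\ell s'}{gs_1\first gs'}$ and apply the fourth clause of $\bullet$. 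Its first branch fires precisely because $k<ks_1+\textup{dom}(gs_1)$, and it returns $\layers{k\first ks}{\ell\first\ell s}{g\first gs}=t$.

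\textbf{Main obstacle.} The arithmetic is trivial, so the difficulty is purely syntactic bookkeeping. The equalities are literal term equalities, not equalities modulo the coherence equations, so I must make sure that:
\begin{itemize}
\item the pattern-matching clauses of $\bullet$ and $\star$ are applied to terms that literally have the expected shape $\layers{\cdot}{\cdot}{\cdot}$, with associativity of $\semicolon$ nested to the right as in the definition of $\vec L$;
\item $\layer{k}{\ell}{g}$ itself counts as the sequence $\layers{[k]}{[\ell]}{[g]}$, so that the fourth clause of $\bullet$ (a single layer on the left) is the one that applies rather than the fifth;
\item in the $\star$ clauses, $\textup{id}_k\star\layer{0}{\ell}{g}$ gives upper width $0+k$ and not something requiring further normalization.
\end{itemize}
In Isabelle this amounts to the equations of the function definitions, plus perhaps a lemma that $\vec L$ of a singleton is $L$.
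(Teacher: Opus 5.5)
Your proof is correct and follows the same route as the paper. Both argue by induction on the number of layers: the inductive step reduces $\textup{NF}(t)$ to $\layer{k}{\ell}{g}\bullet\layers{ks}{\ell s}{gs}$, and the first branch of the single-layer clause of $\bullet$ then fires because $k<ks_1+\textup{dom}(gs_1)$. The only difference is that you explicitly check the single-layer case $\textup{NF}(\layer{k}{\ell}{g})=\layer{k}{\ell}{g}$ through the two $\star$ clauses, whereas the paper just asserts it as a base case.
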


\begin{proof} The proof is by induction on the number of layers of $t$. The statement is true when $t=\textup{id}_n$ and $t=\layer{k}{\ell}{g}$. Assume $\layers{k\first ks}{\ell \first \ell s}{g\first gs}$ is in normal form, so in particular $k<ks_1+\textup{dom}(gs_1)$, which implies the following.
\begin{multline*}
    \textup{NF}(\layers{k\first ks}{\ell \first \ell s}{g\first gs}) = \textup{NF}(\layer{k}{\ell}{g})\bullet \textup{NF}(\layers{ks}{\ell s}{gs}) \\
    \overset{\textup{IH}}{=} \layer{k}{\ell}{g}\bullet \layers{ks}{\ell s}{gs} = \layers{k\first ks}{\ell \first \ell s} {g\first gs}
\end{multline*}
This proof has been verified with Isabelle/HOL. \qed
\end{proof}

In order to simplify the proofs on the TRS and make the correspondence between generators and layers clearer, we apply a preprocessing step to terms that replaces each generator $g$ by its corresponding layer $\layer{0}{0}{g}$. The set of preprocessed terms is denoted $\pp{\Sigma}$ and satisfies $\nf{\Sigma}\subset\pp{\Sigma}\subset\terms{\Sigma}$.

\subsection{Terminating and Confluent Rewriting System}

Let $\mathcal{R}$ be the TRS containing the rewrite rules \labelcref{ru:seqasso,ru:parasso,ru:seqidleft,ru:seqidright,ru:idid,ru:ididctx,ru:par,ru:parseqidleft,ru:parseqidright,ru:layercom,ru:layercomctx} defined in \cref{fig:prors}. Rewriting can be applied to sub-terms, meaning that $K[L]\rewrite{\mathcal{R}}K[R]$ is a valid rewrite for any rewrite rule $(L\rewrite{}R)\in\mathcal{R}$ and any $K[\cdot]\in\ppcontexts{\Sigma}{\textup{dom}(L)}{\textup{cod}(L)}$. Moreover, some rewrite rules (\labelcref{ru:par,,ru:layercom,,ru:layercomctx}) are subject to additional applicability conditions. All such conditions are decidable and stable under rewriting, ensuring that the normalization procedure is effective and suitable for automation. As stated in the following proposition, the rewriting system $\mathcal{R}$ is sound with respect to the coherence equations.
\begin{proposition}\label{prop:prosoundness}
    $t_1\mrewrite{\mathcal{R}}t_2\implies\cat{PRO}_\Sigma\vdash t_1=t_2$ for any $t_1,t_2\in\pp{\Sigma}$.
\end{proposition}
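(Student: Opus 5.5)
The argument reduces multi-step soundness to single-step soundness and then checks the rules of \cref{fig:prors} one by one.

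First, $\mrewrite{\mathcal{R}}$ is the reflexive--transitive closure of $\rewrite{\mathcal{R}}$, and $\cat{PRO}_\Sigma\vdash\cdot=\cdot$ is an equivalence relation. So an induction on the number of rewrite steps reduces the claim to one step: if $t_1\rewrite{\mathcal{R}}t_2$, then $\cat{PRO}_\Sigma\vdash t_1=t_2$. A single step has the form $K[L]\rewrite{\mathcal{R}}K[R]$ for some rule $(L\rewrite{}R)\in\mathcal{R}$, possibly subject to its applicability condition, and some context $K[\cdot]\in\ppcontexts{\Sigma}{\textup{dom}(L)}{\textup{cod}(L)}$. The equivalence is closed under context, so it suffices to show $\cat{PRO}_\Sigma\vdash L=R$ for every rule instance. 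This is a case analysis over the rules \labelcref{ru:seqasso,ru:parasso,ru:seqidleft,ru:seqidright,ru:idid,ru:ididctx,ru:par,ru:parseqidleft,ru:parseqidright,ru:layercom,ru:layercomctx}.

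Most rules should be direct instances, or orientations, of the coherence equations:
\begin{itemize}
\item \labelcref{ru:seqasso,ru:parasso} come from \cref{eq:seqassociativity,eq:parassociativity};
\item \labelcref{ru:seqidleft,ru:seqidright} come from \cref{eq:seqidentity};
\item \labelcref{ru:idid,ru:ididctx} should follow from \cref{eq:idinduct} together with \cref{eq:paridentity} and, for the contextual variant, associativity of $\otimes$.
\end{itemize}
The rules mixing parallel composition with layers (\labelcref{ru:par,ru:parseqidleft,ru:parseqidright}) need a short derivation. Write $u\otimes v$ as $(u\semicolon\textup{id})\otimes(\textup{id}\semicolon v)$ using \cref{eq:seqidentity}, then apply \cref{eq:interchange} to get $(u\otimes\textup{id})\semicolon(\textup{id}\otimes v)$. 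Finally, absorb identities into layers via \cref{eq:idinduct,eq:parassociativity}, for example $\textup{id}_n\otimes\layer{k}{\ell}{g}=\layer{n+k}{\ell}{g}$. Where the applicability conditions enter, they only guarantee that the domains and codomains match, so that these equations are well-typed.

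The main obstacle will be the layer-commutation rules \labelcref{ru:layercom,ru:layercomctx}. They rewrite $\layer{k'}{\ell'}{g'}\semicolon\layer{k}{\ell+\textup{def}(g')}{g}$-like patterns into the swapped order mirrored in the definition of $\bullet$, under the condition $k'\ge k+\textup{dom}(g)$ or its analogue. I would handle this by the following steps.
\begin{enumerate}
\item Use the condition to split the upper identity of the lower-positioned layer as $\textup{id}_{k'}=\textup{id}_{k}\otimes\textup{id}_{\textup{dom}(g)}\otimes\textup{id}_{m}$ with $m=k'-k-\textup{dom}(g)\ge0$, via \cref{eq:idinduct}.
\item Reassociate with \cref{eq:parassociativity}, so that both layers have the common shape $\textup{id}_k\otimes X\otimes\textup{id}_m\otimes Y\otimes\textup{id}_{\ell'}$.
\item Use \cref{eq:interchange} twice, padded with \cref{eq:seqidentity}, to rewrite $(\textup{id}\otimes g'\text{-part})\semicolon(g\otimes\textup{id}\text{-part})$ as $g\otimes\textup{id}\otimes g'$, and then split it back in the opposite order.
\item Recombine the identities with \cref{eq:idinduct} to obtain the exact widths prescribed on the right-hand side, such as $k'-\textup{def}(g)$.
\end{enumerate}
The contextual variant \labelcref{ru:layercomctx} follows from the same derivation, with sequential associativity used to expose the redex. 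The bookkeeping of widths is fiddly, so I would state the general "two layers on disjoint wires commute" equation once as an auxiliary lemma and instantiate it for both rules.
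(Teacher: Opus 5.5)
Your proposal is correct and follows the paper's proof. Both reduce the claim to a single rule instance at the root: you do this via closure of $\cat{PRO}_\Sigma$-equality under context, the paper via induction on $K[\cdot]$. Both then treat \eqref{ru:seqasso}--\eqref{ru:parseqidright} as near-direct consequences of the coherence equations, and derive \eqref{ru:layercom} and \eqref{ru:layercomctx} by splitting identities and applying \cref{eq:interchange}.

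A minor remark: \eqref{ru:par}--\eqref{ru:parseqidright} involve no layers, so your final identity-absorption step there is unnecessary, since interchange already yields the right-hand side exactly. Also, the condition of \eqref{ru:par} serves termination rather than typing, which is harmless for soundness.
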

\begin{proof}
    For any rewrite rule $(L\rewrite{}R)\in\mathcal{R}$ and any $K[\cdot]\in\ppcontexts{\Sigma}{\textup{dom}(L)}{\textup{cod}(L)}$, we prove $\cat{PRO}_\Sigma\vdash K[L]=K[R]$ by induction on $K[\cdot]$ where the only non-trivial case is $K[\cdot]=[\cdot]$. For most rewrite rules this is trivial; the only interesting cases \labelcref{ru:layercom,ru:layercomctx} are proved using, in particular, \cref{eq:interchange}. \qed
\end{proof}

\begin{figure}[t]
    \fbox{\begin{minipage}{0.982\linewidth}\begin{center}
        \vspace{-1em}
        \hspace{-0.8em}
        \begin{subfigure}{0.37\textwidth}
            \begin{equation}\label{ru:seqasso}\tag{R1}
                (t\semicolon u)\semicolon v \longrightarrow t\semicolon (u\semicolon v)
            \end{equation}
        \end{subfigure}\hspace{2em}
        \begin{subfigure}{0.42\textwidth}
            \begin{equation}\label{ru:parasso}\tag{R2}
                (t\otimes u)\otimes v \longrightarrow t\otimes (u\otimes v)
            \end{equation}
        \end{subfigure}

        \vspace{-0.3em}
        \hspace{-0.8em}
        \begin{subfigure}{0.25\textwidth}
            \begin{equation}\label{ru:seqidleft}\tag{R3}
                \textup{id}_n\semicolon t \longrightarrow t
            \end{equation}
        \end{subfigure}\hspace{2em}
        \begin{subfigure}{0.25\textwidth}
            \begin{equation}\label{ru:seqidright}\tag{R4}
                t\semicolon\textup{id}_n \longrightarrow t
            \end{equation}
        \end{subfigure}

        \vspace{-0.3em}
        \hspace{-0.8em}
        \begin{subfigure}{0.36\textwidth}
            \begin{equation}\label{ru:idid}\tag{R5}
                \textup{id}_n\otimes\textup{id}_m \longrightarrow \textup{id}_{n+m}
            \end{equation}
        \end{subfigure}\hspace{2em}
        \begin{subfigure}{0.46\textwidth}
            \begin{equation}\label{ru:ididctx}\tag{R6}
                \textup{id}_n\otimes(\textup{id}_m\otimes t) \longrightarrow \textup{id}_{n+m}\otimes t
            \end{equation}
        \end{subfigure}

        \vspace{-0.3em}
        \hspace{-0.8em}
        \begin{subfigure}{0.82\textwidth}
            \begin{equation}\label{ru:par}\tag{R7}
                u\otimes v \longrightarrow (u\otimes\textup{id}_{\textup{dom}(v)})\semicolon(\textup{id}_{\textup{cod}(u)}\otimes v)
                \;\;\text{if}\;\; u\ne\textup{id}_n\wedge v\ne\textup{id}_m
            \end{equation}
        \end{subfigure}

        \vspace{-0.3em}
        \hspace{-0.8em}
        \begin{subfigure}{0.54\textwidth}
            \begin{equation}\label{ru:parseqidleft}\tag{R8}
                \textup{id}_n\otimes(u\semicolon v) \longrightarrow (\textup{id}_n\otimes u)\semicolon(\textup{id}_n\otimes v)
            \end{equation}
        \end{subfigure}

        \vspace{-0.3em}
        \hspace{-0.8em}
        \begin{subfigure}{0.54\textwidth}
            \begin{equation}\label{ru:parseqidright}\tag{R9}
                (u\semicolon v)\otimes\textup{id}_n \longrightarrow (u\otimes\textup{id}_n)\semicolon(v\otimes\textup{id}_n)
            \end{equation}
        \end{subfigure}

        \vspace{-0.3em}
        \hspace{-0.8em}
        \begin{subfigure}{0.73\textwidth}
            \begin{equation}\label{ru:layercom}\tag{R10}
                \begin{array}{c}
                    \layer{k_1}{\ell_1}{g_1}\semicolon\layer{k_2}{\ell_2}{g_2} \longrightarrow \layer{k_2}{\ell_2+\textup{def}(g_1)}{g_2}\semicolon \layer{k_1-\textup{def}(g_2)}{\ell_1}{g_1} \\
                    \text{if}\;\; k_1\ge k_2+\textup{dom}(g_2)
                \end{array}
            \end{equation}
        \end{subfigure}

        \vspace{-0.3em}
        \hspace{-0.8em}
        \begin{subfigure}{0.85\textwidth}
            \begin{equation}\label{ru:layercomctx}\tag{R11}
                \begin{array}{c}
                    \layer{k_1}{\ell_1}{g_1}\semicolon (\layer{k_2}{\ell_2}{g_2}\semicolon t) \longrightarrow \layer{k_2}{\ell_2+\textup{def}(g_1)}{g_2}\semicolon (\layer{k_1-\textup{def}(g_2)}{\ell_1}{g_1}\semicolon t) \\
                    \text{if}\;\; k_1\ge k_2+\textup{dom}(g_2)
                \end{array}
            \end{equation}
        \end{subfigure}
        \vspace{0.2em}
    \end{center}\end{minipage}}
    \caption{The rewrite rules are defined for any $n,m,k_1,k_2,\ell_1,\ell_2\in\N$, $g_1,g_2\in\Sigma$, $t,u,v\in\terms{\Sigma}$ and \labelcref{ru:par,,ru:layercom,,ru:layercomctx} are conditioned. \label{fig:prors}}
\end{figure}

In the following, we prove that preprocessed terms are reduced into their normal form by applying $\mathcal{R}$. To do so, we show that $\mathcal{R}$ terminates and is confluent. We initially attempted to rely on automated termination and confluence provers such as \textsf{AProVE}~\cite{aprove} and the provers of the \textsf{CoCo}{} competition~\cite{coco}. While these tools were useful for guiding early refinements of the rules, they ultimately did not succeed in producing termination or confluence certificates for our system (at least when used via their web interface, given our lack of expertise). We therefore carried out the proofs manually and formalized the most important steps with the proof assistant Isabelle/HOL. To keep the formalization lightweight, we chose not to use the term rewriting infrastructure present in Isabelle/HOL, concentrating instead our formalizing efforts on the equalities and inequalities at the heart of the proofs.

\begin{proposition}\label{prop:proterminaison}
    $\mathcal{R}$ terminates.
\end{proposition}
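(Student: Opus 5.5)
The plan is a lexicographic combination of three measures. First, a weakly monotone interpretation on terms that strictly decreases under the structural rules \labelcref{ru:seqidleft,ru:seqidright,ru:idid,ru:ididctx,ru:par,ru:parseqidleft,ru:parseqidright}. Second, a standard associativity measure that handles \labelcref{ru:seqasso,ru:parasso}. Third, an ``inversion'' measure on upper widths that handles \labelcref{ru:layercom,ru:layercomctx}. Each rule will be checked to strictly decrease one component and leave all earlier components unchanged. Since rewriting is closed under preprocessed contexts, every component has to be monotone in its arguments.

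\emph{First component.} I would interpret terms in $\N_{\ge 1}$ by
\[
[\textup{id}_n]=1,\qquad [g]=c,\qquad [u\semicolon v]=[u]+[v]+1,\qquad [u\otimes v]=2[u][v],
\]
for a constant $c$ chosen large enough, and then check each rule by simple arithmetic.
\begin{itemize}
\item \labelcref{ru:seqidleft,ru:seqidright}: $[t]+2>[t]$.
\item \labelcref{ru:idid}: $2>1$.
\item \labelcref{ru:ididctx}: $4[t]>2[t]$.
\item \labelcref{ru:parseqidleft,ru:parseqidright}: $2([u]+[v]+1)>2[u]+2[v]+1$.
\item \labelcref{ru:par}: $2[u][v]>2[u]+2[v]+1$. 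This requires that non-identity terms have a large enough value, so I would refine the interpretation so that every term containing a generator has value $\ge 3$. This holds after preprocessing, since generators only occur inside layers, and $[\layer{k}{\ell}{g}]=4c$.
\end{itemize}
Associativity (\labelcref{ru:seqasso,ru:parasso}) preserves this value exactly. Rules \labelcref{ru:layercom,ru:layercomctx} also preserve it, because every layer has the same interpretation $4c$. Monotonicity is clear because all values are $\ge 1$.

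\emph{Second component.} For the associativity rules I would use the usual measure that sums, over every $\semicolon$-node and every $\otimes$-node, the size of its left child. This strictly decreases under \labelcref{ru:seqasso,ru:parasso}. It is unchanged by \labelcref{ru:layercom,ru:layercomctx}, since these rules preserve the term's shape: layers are replaced by layers of identical syntactic shape.

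\emph{Third component.} Under \labelcref{ru:layercom,ru:layercomctx} the term shape is fixed. So I would read off the list of upper widths of all layers, in left-to-right order of their occurrence in the term, and compare such lists of fixed length lexicographically; this order is well-founded. The rule replaces the adjacent pair $(k_1,k_2)$ by $(k_2,k_1-\textup{def}(g_2))$. The condition $k_1\ge k_2+\textup{dom}(g_2)$, together with $\textup{dom}(g_2)>0$ (no states), gives $k_2<k_1$, so the first changed entry decreases. This is where state-freeness is used. Effect-freeness, $\textup{cod}(g_2)>0$, guarantees $k_1-\textup{def}(g_2)\ge k_2+\textup{cod}(g_2)$ is a genuine natural number, so the new layer is well-formed.

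\emph{Main obstacle.} I expect the hardest part to be the interaction between the structural rules and the layer structure. A layer $\textup{id}_k\otimes(g\otimes\textup{id}_\ell)$ is itself a term, so I must verify that no rule of $\mathcal{R}$ destroys it in a way that breaks the measures. In particular:
\begin{itemize}
\item \labelcref{ru:par} never applies inside a layer, since one side is always an identity.
\item \labelcref{ru:ididctx} does not apply, since a layer has the shape $\textup{id}_k\otimes(g\otimes\cdots)$ rather than $\textup{id}_k\otimes(\textup{id}_m\otimes\cdots)$.
\item Rules acting on a context around a layer, such as \labelcref{ru:ididctx} or \labelcref{ru:parasso} applied to $\textup{id}_n\otimes\layer{k}{\ell}{g}$, do change subterms. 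However, such steps are already accounted for by the first two components, and their effect on the third component is irrelevant because comparison is lexicographic.
\end{itemize}
I would carry out the arithmetic inequalities of the first and third components in Isabelle/HOL, in the same lightweight style as \cref{lem:nfnfterm}.
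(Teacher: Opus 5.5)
\paragraph{A failing case for (R7) in your first component.}
Your first component does not decrease on every instance of rule (R7). The side condition of (R7) is syntactic ($u\ne\textup{id}_n$, $v\ne\textup{id}_m$), so the rule also fires on generator-free terms such as $u=\textup{id}_a\otimes\textup{id}_b$ and $v=\textup{id}_c\otimes\textup{id}_d$. These are legitimate preprocessed terms. Your interpretation gives $[u]=[v]=2$, so the redex $u\otimes v$ has value $2\cdot2\cdot2=8$. The contractum $((\textup{id}_a\otimes\textup{id}_b)\otimes\textup{id}_{c+d})\semicolon(\textup{id}_{a+b}\otimes(\textup{id}_c\otimes\textup{id}_d))$ has value $4+4+1=9$. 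The leading component of your lexicographic order therefore increases, and the argument breaks at this step.

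In general, $2[u][v]>2[u]+2[v]+1$ is equivalent to $([u]-1)([v]-1)\ge 2$, which fails exactly when $[u]=[v]=2$. Your proposed refinement (terms containing a generator have value $\ge3$) does not help, because the offending terms contain no generator.

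\paragraph{The repair.}
The fix is cheap: take $[\textup{id}_n]=2$ and $c\ge5$. Then every term that is not syntactically an identity has value at least $5$, and (R7) becomes $2([u]-2)([v]-2)>9$, which holds. The remaining checks survive:
\begin{itemize}
\item (R1) and (R2) are still equalities.
\item (R3) and (R4) give $[t]+3>[t]$.
\item (R5) gives $8>2$.
\item (R6) gives $16[t]>4[t]$.
\item (R8) and (R9) give $4[u]+4[v]+4>4[u]+4[v]+1$.
\item Every layer now has value $16c$, so (R10) and (R11) still preserve the first component.
\end{itemize}
Alternatively, you could adopt the paper's $\beta$ directly.

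\paragraph{Comparison with the paper.}
With this repair your proof goes through, and it differs from the paper's in an instructive way.

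The paper uses a single polynomial interpretation $\beta$, with $\beta(\textup{id}_n)=2$, $\beta(u\semicolon v)=2\beta(u)+\beta(v)+1$ and $\beta(u\otimes v)=\beta(u)^2\beta(v)$. It strictly decreases on (R1)--(R9), associativity included, so no separate associativity measure is needed. The paper then handles (R10) and (R11) with the real-valued $\delta_D$, where $\delta_D(u\semicolon v)=\delta_D(u)+\delta_D(v)/D$. The parameter $D\ge 2+\max|\textup{def}(g)|$ is there to absorb the possible growth of the second layer's upper width to $k_1-\textup{def}(g_2)$ when $\textup{def}(g_2)\le0$.

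Your lexicographic comparison of upper-width lists does the same job more directly. Only the first changed entry matters, and it drops from $k_1$ to $k_2<k_1$ by state-freeness. So the sign of $\textup{def}(g_2)$ is irrelevant and no parameter is needed.

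One small correction: effect-freeness is not what makes $k_1-\textup{def}(g_2)$ a natural number. The bound $k_1-\textup{def}(g_2)\ge k_2+\textup{cod}(g_2)\ge 0$ holds for any generator.
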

\begin{proof}
    We show that any rewrite sequence $t_1\rewrite{\mathcal{R}}t_2\rewrite{\mathcal{R}}t_3\rewrite{\mathcal{R}}\dots$ is finite. To do so, we define two positive quantities $\beta:\terms{\Sigma}\to\N$ and $\delta_D:\terms{\Sigma}\to\R^+$ (see \cref{fig:quantities}), where the latter is parameterized by $D\in\N$, and prove that $(\beta,\delta_D)$ strictly decreases by lexicographical order, for a carefully chosen $D\in\N$. 
    Notice that as $\delta_D$ takes real positive values, we must also show that it decreases by at least a constant at each step, otherwise the value could decrease infinitely.   
    Given a term $t\in\terms{\Sigma}$, let $D(t)\defeq 2+\max_{g\in\textup{gen}(t)}|\textup{def}(g)|$. Notice that $D(\cdot)$ is preserved under rewriting, meaning that $D(t_i)=D(t_j)$ for any $i,j$. We prove that
    \begin{gather*}
        \beta(K[L])>\beta(K[R])
        \quad\text{or}\quad
        \beta(K[L])=\beta(K[R])\wedge\delta_D(K[L])\ge\delta_D(K[R])+\nicefrac{1}{D}
    \end{gather*}
    for any rewrite rule $(L\rewrite{}R)\in\mathcal{R}$ and any context $K[\cdot]\in\ppcontexts{\Sigma}{\textup{dom}(L)}{\textup{cod}(L)}$, where we chose $D\defeq D(K[L])=D(K[R])$. This is straightforwardly proved by induction on $K[\cdot]$ where the base cases $K[\cdot]=[\cdot]$ have been verified with Isabelle/HOL. More precisely, we prove  that \labelcref{ru:seqasso,ru:parasso,ru:seqidleft,ru:seqidright,ru:idid,ru:ididctx,ru:par,ru:parseqidleft,ru:parseqidright} strictly decrease $\beta$ while \labelcref{ru:layercom,ru:layercomctx} preserve $\beta$ and strictly decrease $\delta_D$. \qed
\end{proof}

\begin{remark}
    The fact that \labelcref{ru:layercom,ru:layercomctx} strictly decrease $\delta_D$ is not trivial, and is only true with a careful choice of $D\in\N$. More precisely, $D\defeq D(K[L])=D(K[R])$ ensures that $D\ge 2-\textup{def}(g_2)$, which implies $1-\nicefrac{1}{D}\ge\nicefrac{1}{D}(1-\textup{def}(g_2))$. Moreover, the condition $k_1\ge k_2+\textup{dom}(g_2)$ with $\textup{dom}(g_2)\ge1$ yields the following.
    \begin{gather*}
        k_1\ge k_2+1
        \implies (1-\nicefrac{1}{D})k_1\ge (1-\nicefrac{1}{D})(k_2+1) \\
        \implies k_1 + \nicefrac{1}{D}k_2 \ge k_2 + \nicefrac{1}{D}k_1 + (1-\nicefrac{1}{D}) \ge k_2 + \nicefrac{1}{D}k_1 + \nicefrac{1}{D}(1-\textup{def}(g_2)) \\
        \implies k_1 + \nicefrac{1}{D}k_2 \ge k_2 + \nicefrac{1}{D}(k_1-\textup{def}(g_2)) + \nicefrac{1}{D}
        \implies \delta_D(L)\ge\delta_D(R)+\nicefrac{1}{D}
    \end{gather*}
\end{remark}

\begin{lemma}\label{lem:nfispreserved}
    $t_1\rewrite{\mathcal{R}}^* t_2{\implies}\textup{NF}(t_1)=\textup{NF}(t_2)$ for any $t_1,t_2\in\pp{\Sigma}$.
\end{lemma}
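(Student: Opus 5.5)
By induction on the length of the rewrite sequence, it suffices to treat a single step $t_1\rewrite{\mathcal{R}}t_2$, that is, to show $\textup{NF}(K[L])=\textup{NF}(K[R])$ for every rule $(L\rewrite{}R)\in\mathcal{R}$ and every context $K[\cdot]\in\ppcontexts{\Sigma}{\textup{dom}(L)}{\textup{cod}(L)}$. Since $\textup{NF}$ is defined compositionally ($\textup{NF}(u\semicolon v)=\textup{NF}(u)\bullet\textup{NF}(v)$, $\textup{NF}(u\otimes v)=\textup{NF}(u)\star\textup{NF}(v)$), an induction on $K[\cdot]$ reduces this to the base case $K[\cdot]=[\cdot]$. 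In the inductive cases, $\textup{NF}(K'[L])=\textup{NF}(K'[R])$ immediately gives the equality after applying $\bullet$ or $\star$ with the other, unchanged, argument. So everything comes down to showing $\textup{NF}(L)=\textup{NF}(R)$ for each rule.

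The core of the argument is a set of algebraic lemmas about $\bullet$ and $\star$ on normal forms, which I would prove first, by induction on the number of layers. First, identities are units: $\textup{id}_n\bullet x=x=x\bullet\textup{id}_m$ and $\textup{id}_0\star x=x=x\star\textup{id}_0$, and $\textup{id}_n\star\textup{id}_m=\textup{id}_{n+m}$. Second, both operators are associative:
\begin{gather*}
(x\bullet y)\bullet z=x\bullet(y\bullet z),\qquad (x\star y)\star z=x\star(y\star z).
\end{gather*}
Third, an interchange-type identity holds:
\begin{gather*}
x\star y=(x\star\textup{id}_{\textup{dom}(y)})\bullet(\textup{id}_{\textup{cod}(x)}\star y),
\end{gather*}
together with its variants $\textup{id}_n\star(x\bullet y)=(\textup{id}_n\star x)\bullet(\textup{id}_n\star y)$ and symmetrically on the right. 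Fourth, a layer-commutation identity: if $k_1\ge k_2+\textup{dom}(g_2)$, then
\begin{gather*}
\layer{k_1}{\ell_1}{g_1}\bullet x=\layer{k_2}{\ell_2+\textup{def}(g_1)}{g_2}\bullet(\layer{k_1-\textup{def}(g_2)}{\ell_1}{g_1}\bullet x')
\end{gather*}
whenever $x=\layer{k_2}{\ell_2}{g_2}\bullet x'$. With these lemmas in hand, the rules follow directly. Rules \labelcref{ru:seqasso,ru:parasso} follow from associativity. Rules \labelcref{ru:seqidleft,ru:seqidright,ru:idid,ru:ididctx} follow from the unit laws, using $\star$-associativity for \labelcref{ru:ididctx}. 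Rules \labelcref{ru:par,ru:parseqidleft,ru:parseqidright} follow from the interchange identities. Rules \labelcref{ru:layercom,ru:layercomctx} follow from layer commutation.

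I expect associativity of $\bullet$ to be the main obstacle. The map $\bullet$ is defined by inserting a single layer $\layer{k'}{\ell'}{g'}$ into a normal form, bubbling it downward while adjusting widths by defects, so associativity ultimately requires a commutation lemma: inserting two layers one after the other gives the same result as inserting their normal-form composite. I would first prove that the single-layer insertion $\layer{k}{\ell}{g}\bullet x$ produces the unique normal form obtained by sorting, and that two successive insertions commute in the appropriate sense. This requires a case analysis on the conditions $k'<k+\textup{dom}(g)$, tracking the width arithmetic $k'-\textup{def}(g)$ and $\ell+\textup{def}(g')$. The arithmetic here is exactly where I would rely on Isabelle/HOL, as the paper does for \cref{lem:nfnfterm}.

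The interaction between $\star$ and $\bullet$ needs similar care. When a left-padded layer meets a right-padded layer, the commutation condition always triggers, so the layers of $y$ are pushed past those of $x$. Showing that this reproduces the concatenated form given by the definition of $\star$ again relies on the insertion lemma.
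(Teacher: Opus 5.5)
Your plan matches the paper's proof. You reduce to a single step, induct on the context using that $\textup{NF}$ is compositional, and discharge $\textup{NF}(L)=\textup{NF}(R)$ rule by rule. The paper simply delegates these base cases to Isabelle/HOL, and your list of $\bullet$/$\star$ identities is the right content for them; in particular, your layer-commutation identity is exactly the base case of \labelcref{ru:layercomctx}. Two details need fixing.

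First, your last paragraph describes the \labelcref{ru:par} case backwards. Consider $(x\star\textup{id}_{\textup{dom}(y)})\bullet(\textup{id}_{\textup{cod}(x)}\star y)$:
\begin{itemize}
\item The last layer of $x$ has upper width $ks_p<\textup{cod}(x)$, which uses $\textup{cod}(gs_p)\ge1$.
\item The first layer of $\textup{id}_{\textup{cod}(x)}\star y$ has upper width at least $\textup{cod}(x)$.
\item Each earlier layer of $x$ cannot pass its successor, because $x$ is in normal form.
\end{itemize}
So the commutation condition never fires, and $\bullet$ merely prepends the layers of $x$. This gives exactly the concatenation that defines $\star$. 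If the condition did fire, the layers of $y$ would come first and you could not recover $x\star y$.

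Second, do not prove associativity of $\bullet$ via uniqueness of normal forms. That uniqueness is not yet available: the paper only obtains it through confluence, whose proof uses this very lemma. It is also unnecessary. Write $y$ as a first layer $y_1$ followed by $y'$, and prove $(\layer{k}{\ell}{g}\bullet y)\bullet z=\layer{k}{\ell}{g}\bullet(y\bullet z)$ by induction on $y$:
\begin{itemize}
\item If the inserted layer stays in front, the equality holds by definition.
\item If it swaps with $y_1$, use the induction hypothesis together with your commutation identity, taking $w=y'\bullet z$.
\end{itemize}
General associativity then follows by induction on the layers of $x$, using the clause $\layers{k\first ks}{\ell\first\ell s}{g\first gs}\bullet z=\layer{k}{\ell}{g}\bullet(\layers{ks}{\ell s}{gs}\bullet z)$.
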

\begin{proof}
    For any rewrite rule $(L\rewrite{}R)\in\mathcal{R}$ and any $K[\cdot]\in\ppcontexts{\Sigma}{\textup{dom}(L)}{\textup{cod}(L)}$, the equality $\textup{NF}(K[L])=\textup{NF}(K[R])$ is straightforwardly proved by induction on $K[\cdot]$ where the base cases $K[\cdot]=[\cdot]$ have been verified with Isabelle/HOL. \qed
\end{proof}

\begin{figure}[t]
    \begin{center}
        \setlength{\tabcolsep}{2.4pt}
        \begin{tabular}{|c|c|c|c|c|}
            \hline
            & $\alpha:\terms{\Sigma}\to\N$ & $\beta:\terms{\Sigma}\to\N$ & $\gamma:\terms{\Sigma}\to\N$ & $\delta_D:\terms{\Sigma}\to\R^+$ \\
            \hline
            $\textup{id}_n$ & $0$ & $2$ & $n$ & $0$ \\
            \hline
            $g$ & $\begin{array}{c}
                nm^2{+}1 \;\text{if}\; g=\sigma_{n,m}\\
                0 \;\text{otherwise}
            \end{array}$ & $5$ & $0$ & $0$ \\
            \hline
            $u\semicolon v$ & $\alpha(u)+\alpha(v)$ & $2\beta(u)+\beta(v)+1$ & $\gamma(u)+\gamma(v)$ & $\delta_D(u)+\frac{1}{D}\delta_D(v)$ \\ 
            \hline
            $u\otimes v$ & $\alpha(u)+\alpha(v)$ & $\beta(u)^2\beta(v)$ & $\gamma(u)+2\gamma(v)$ & $\begin{array}{c}
                n+\delta_D(v) \;\text{if}\; u=\textup{id}_n\\
                \delta_D(u)+\delta_D(v) \;\text{otherwise}
            \end{array}$\\
            \hline
            \hline
            $\layer{k}{\ell}{g}$ & $\begin{array}{c}
                d+1 \;\text{if}\; g=\tau_d\\
                \alpha(g) \;\text{otherwise}
            \end{array}$ & $200$ & $k+4\ell$ & $k$ \\
            \hline
        \end{tabular}

        \vspace{1em}

        \setlength{\tabcolsep}{4.95pt}
        \begin{tabular}{|c|c|c|c|c|}
            \hline
            Rewrite rules & $\;\;\alpha\;\;$ & $\;\;\beta\;\;$ & $\;\;\gamma\;\;$ & $\;\;\delta_D\;\;$ \\
            \hline
            \labelcref{ru:seqasso,,ru:parasso,,ru:seqidleft,,ru:seqidright,,ru:idid,,ru:ididctx,,ru:par,,ru:parseqidleft,,ru:parseqidright} & $=$ & $>$ & & \\
            \hline
            \labelcref{ru:layercom,ru:layercomctx} & & $=$ & & $>$ \\
            \hline
            \labelcref{ru:swap:swap0left,,ru:swap:swap0right,,ru:swap:swapreduce} & $>$ & & & \\
            \hline
            \labelcref{ru:swap:layernat,ru:swap:layernatctx} & $=$ & $=$ & $>$ & \\
            \hline
            \labelcref{ru:swap:layercon,ru:swap:layerconctx} & $>$ & & & \\
            \hline
            \labelcref{ru:swap:layerfus,ru:swap:layerfusctx} & $>$ & & & \\
            \hline
            \labelcref{ru:swap:layercom,ru:swap:layercomctx} & $=$ & $=$ & $=$ & $>$ \\
            \hline
        \end{tabular}
    \end{center}
    \caption{The top table defines the positive quantities $\alpha,\beta,\gamma,\delta_D$, where the former is parameterized by $D\in\N$. These quantities are used to prove the termination of $\mathcal{R}$ and $\mathcal{R}_0$ in \cref{prop:proterminaison,prop:permterminaison} respectively. The last line of the table gives the value of each quantity applied to layers. The bottom table describes which rewrite rules strictly decrease ($>$) or preserve ($=$) each quantity. All cases so indicated have been verified with Isabelle/HOL. \label{fig:quantities}}
\end{figure}

\begin{lemma}\label{lem:nfterminaison}
    $t\notrewrite{\mathcal{R}}\implies t\in \nf{\Sigma}$ for any $t\in \pp{\Sigma}$.
\end{lemma}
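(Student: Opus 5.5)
We proceed by structural induction on the preprocessed term $t\in\pp{\Sigma}$, showing that an $\mathcal{R}$-irreducible term is either $\textup{id}_n$ or a sequence $\layers{ks}{\ell s}{gs}$ whose adjacent layers satisfy the condition of \cref{def:nf}. Since $t$ is preprocessed, its leaves are identities or layers $\layer{0}{0}{g}$, and its internal nodes are $\semicolon$ or $\otimes$. Every subterm of an irreducible term is irreducible, because rewriting is closed under context. Hence the induction hypothesis applies to both immediate subterms.

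\textbf{Case $t=u\otimes v$.} By the induction hypothesis, $u$ and $v$ are in normal form. If both are non-identities, \labelcref{ru:par} applies, so one of them is an identity. If both are identities, \labelcref{ru:idid} applies.

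Suppose $u=\textup{id}_n$ and $v$ is a sequence of at least two layers. Then $v$ is a $\semicolon$ node and \labelcref{ru:parseqidleft} applies. So $v$ is a single layer $\textup{id}_k\otimes(g\otimes\textup{id}_\ell)$, and then \labelcref{ru:ididctx} applies unless $t$ is itself one of the syntactic shapes allowed. This needs care: $\textup{id}_n\otimes(\textup{id}_k\otimes(g\otimes\textup{id}_\ell))$ is always reducible by \labelcref{ru:ididctx}. Hence this case is impossible, except that a layer with $k=0$ is literally $\textup{id}_0\otimes\ldots$, which is covered by the same rule.

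Symmetrically, suppose $v=\textup{id}_n$ and $u$ is a layer $\textup{id}_k\otimes(g\otimes\textup{id}_\ell)$. Then \labelcref{ru:parasso} applies, since $u$ is itself a $\otimes$ node. If $u$ is a multi-layer sequence, \labelcref{ru:parseqidright} applies.

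The remaining subtle point is the leaves: after preprocessing, a bare $g$ never occurs, and $\textup{id}_0\otimes(g\otimes\textup{id}_0)$ is itself a layer. So we must check that the only irreducible $\otimes$-terms are exactly the layers $\layer{k}{\ell}{g}$. Here the inner node $g\otimes\textup{id}_\ell$ must not be reducible, which holds since \labelcref{ru:par} requires both sides to be non-identities.

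\textbf{Case $t=u\semicolon v$.} Here $u$ is not itself a $\semicolon$ node, by \labelcref{ru:seqasso}, and neither $u$ nor $v$ is an identity, by \labelcref{ru:seqidleft,ru:seqidright}. By the induction hypothesis and the previous case, $u=\layer{k}{\ell}{g}$ is a single layer and $v=\layers{ks}{\ell s}{gs}$ is a normal form. Hence $t=\layers{k\first ks}{\ell\first\ell s}{g\first gs}$ syntactically. Finally, if $k\ge ks_1+\textup{dom}(gs_1)$, then \labelcref{ru:layercom} applies when $v$ is a single layer, and \labelcref{ru:layercomctx} applies otherwise. So $k<ks_1+\textup{dom}(gs_1)$ and $t\in\nf{\Sigma}$.

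The main obstacle is the $\otimes$ case. We must carefully characterise, under the associativity and identity-merging rules, which $\otimes$-shapes are irreducible. The key point is distinguishing a genuine layer $\textup{id}_k\otimes(g\otimes\textup{id}_\ell)$ from terms such as $\textup{id}_k\otimes g$ or $g\otimes\textup{id}_\ell$ that arise from preprocessing but are not literally layers. For these shapes, we will first try strengthening the induction to the claim that irreducible terms are identities, layers, or sequences of layers. We will then check exhaustively that each other $\otimes$-shape matches the left-hand side of \labelcref{ru:parasso,ru:idid,ru:ididctx,ru:par,ru:parseqidleft,ru:parseqidright}. If a degenerate shape such as $\textup{id}_k\otimes(g\otimes\textup{id}_0)$ escapes all these rules, we will relax the definition of layer to identify it as a layer up to the paper's notational convention.
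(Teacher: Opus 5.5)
Your $\semicolon$ case is sound and matches the paper. Both factors of a $\semicolon$-node of a preprocessed term are preprocessed, so the induction hypothesis does give $u,v\in\nf{\Sigma}$, and (R1), (R3)/(R4) and (R10)/(R11) finish it; you do not need the $\otimes$ case there.

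The gap is the sentence ``By the induction hypothesis, $u$ and $v$ are in normal form'' in the $\otimes$ case. The hypothesis only covers terms of $\pp{\Sigma}$, and the factors of a $\otimes$-node need not be preprocessed. In $\layer{k}{\ell}{g}=\textup{id}_k\otimes(g\otimes\textup{id}_\ell)$, the right factor $g\otimes\textup{id}_\ell$ contains a bare generator. Worse, the conclusion is false there. $g\otimes\textup{id}_\ell$ is irreducible, since (R7) needs two non-identities and (R2), (R9) need a $\otimes$- or $\semicolon$-node on the left, yet it is not in $\nf{\Sigma}$. This is exactly why the lemma is restricted to $\pp{\Sigma}$.

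Consequently your case analysis excludes every combination of normal forms $u,v$, via (R5), (R7), (R6), (R8), (R2) and (R9). It would therefore show that no $\otimes$-term of $\pp{\Sigma}$ is irreducible, contradicting the irreducibility of the layers $\layer{k}{\ell}{g}$, which are precisely what this case must produce. Your fallback does not repair this. The strengthened claim ``irreducible terms are identities, layers or sequences of layers'' is still false for non-preprocessed subterms like $g\otimes\textup{id}_\ell$. Also, $\textup{id}_k\otimes(g\otimes\textup{id}_0)$ is literally $\layer{k}{0}{g}$, so there is nothing to relax. Finally, since $\nf{\Sigma}\subset\pp{\Sigma}$, preprocessed terms contain layers with arbitrary $k,\ell$, so their leaves are not just identities and $\layer{0}{0}{g}$.

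The missing idea is to drop the induction hypothesis in the $\otimes$ case and argue by direct syntactic inspection, using $t\in\pp{\Sigma}$ only to rule out bare generators. Suppose $t=u\otimes v\notrewrite{\mathcal{R}}$.
\begin{itemize}
\item $u=\textup{id}_k$: a $\otimes$-node triggers (R2), a $\semicolon$-node triggers (R9) or (R7), and $u\in\Sigma$ would give $t\notin\pp{\Sigma}$.
\item $v=v_1\otimes v_2$: otherwise (R5) or (R8) applies, or $t\notin\pp{\Sigma}$.
\item $v_1$ is a generator $g$: an identity triggers (R6), a $\otimes$-node triggers (R2) inside $v$, and a $\semicolon$-node triggers (R9) or (R7) inside $v$.
\item $v_2=\textup{id}_\ell$: otherwise (R7) applies to $g\otimes v_2$.
\end{itemize}
Hence $t=\layer{k}{\ell}{g}\in\nf{\Sigma}$, which is how the paper closes this case.
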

\begin{proof}
    By induction on $t$. The base cases $t=\textup{id}_n$ for $n\in\N$ are direct because $\textup{id}_n\notrewrite{\mathcal{R}}$ and $\textup{id}_n\in \nf{\Sigma}$. The base cases $t\in\Sigma$ are trivial because $t\notin \pp{\Sigma}$.
    \begin{itemize}
        \item Suppose $t=u\semicolon v$ and $t\notrewrite{\mathcal{R}}$. Then $u,v\in \pp{\Sigma}$ and $u\notrewrite{\mathcal{R}},v\notrewrite{\mathcal{R}}$ which, by IH, implies $u,v\in \nf{\Sigma}$. Notice that $u\ne\textup{id}_n\ne v$ otherwise \labelcrefor{ru:seqidleft,ru:seqidright} applies. So, it must be the case that $u=\layer{k}{\ell}{g}$ and $v=\layers{ks}{\ell s}{gs}$ for some $k,\ell,p\in\N$, $g\in\Sigma$, $ks,\ell s\in\N^p$, $gs\in\Sigma^p$ such that $p>0$, otherwise \eqref{ru:seqasso} applies. Moreover, $k<ks_1+\textup{dom}(gs_1)$, otherwise \labelcrefor{ru:layercom,ru:layercomctx} applies. Hence, $t=\layers{k\first ks}{\ell\first \ell s}{g\first gs}\in \nf{\Sigma}$.
        \item Suppose $t=u\otimes v$ and $t\notrewrite{\mathcal{R}}$. Then $u\notrewrite{\mathcal{R}},v\notrewrite{\mathcal{R}}$. Notice that $u$ and $v$ are not necessarily in $\pp{\Sigma}$, so we cannot use the IH. Nevertheless, we know that $u,v\notin\Sigma$, otherwise $t\notin \pp{\Sigma}$. In fact, we know for sure that $u=\textup{id}_k$ and $v=v_1\otimes v_2$ for some $k\in\N$ and $v_1,v_2\in\terms{\Sigma}$, otherwise a rule among \labelcref{ru:parasso,,ru:idid,,ru:par,,ru:parseqidleft,,ru:parseqidright} applies. For the same reason and the fact that \eqref{ru:ididctx} is not applicable, it must be the case that $v_1=g$ and $v_2=\textup{id}_\ell$ for some $g\in\Sigma$ and $\ell\in\N$. Hence, $t=\textup{id}_k\otimes(g\otimes\textup{id}_\ell)=\layer{k}{\ell}{g}\in \nf{\Sigma}$.
    \end{itemize} \qed
\end{proof}

\begin{proposition}\label{prop:proconfluence}
    $\mathcal{R}$ is confluent in $\pp{\Sigma}$.
\end{proposition}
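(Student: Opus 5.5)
Confluence will follow from termination together with the invariance of $\textup{NF}$, without any critical-pair analysis. Take $t\in\pp{\Sigma}$ and two reductions $t\mrewrite{\mathcal{R}}t_1$ and $t\mrewrite{\mathcal{R}}t_2$. We must find a common reduct of $t_1$ and $t_2$.

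\textbf{Step 1: stability of $\pp{\Sigma}$.} We first check that $\pp{\Sigma}$ is closed under $\rewrite{\mathcal{R}}$. No rule creates a bare generator outside a layer. The only delicate rules are \eqref{ru:par}, \eqref{ru:parseqidleft} and \eqref{ru:parseqidright}, which split a parallel composition. Their subterms $u,v$ are either layers, identities or compound terms, never bare generators, because $t\in\pp{\Sigma}$ and occurrences of $g$ inside a layer $\textup{id}_k\otimes(g\otimes\textup{id}_\ell)$ are protected. This needs a short case check, in particular for \eqref{ru:parasso} and \eqref{ru:ididctx} applied inside a layer. If that check fails for some rule, we will instead let $\pp{\Sigma}$ denote the closure of preprocessed terms under $\mathcal{R}$ and verify that \cref{lem:nfterminaison} still holds on it.

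\textbf{Step 2: reduce to normal forms.} By \cref{prop:proterminaison}, $t_1$ and $t_2$ reduce to irreducible terms $t_1'$ and $t_2'$, which lie in $\pp{\Sigma}$ by Step 1. By \cref{lem:nfterminaison}, $t_1',t_2'\in\nf{\Sigma}$.

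\textbf{Step 3: conclude via $\textup{NF}$.} By \cref{lem:nfispreserved} applied along $t\mrewrite{\mathcal{R}}t_i\mrewrite{\mathcal{R}}t_i'$, we get $\textup{NF}(t_1')=\textup{NF}(t)=\textup{NF}(t_2')$. By \cref{lem:nfnfterm}, $t_1'=\textup{NF}(t_1')$ and $t_2'=\textup{NF}(t_2')$, so $t_1'=t_2'$. This is the required common reduct, and in fact $\textup{NF}(t)$ is the unique $\mathcal{R}$-normal form of $t$.

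The main obstacle is Step 1: the preservation of $\pp{\Sigma}$ under rewriting. Steps 2 and 3 are immediate from the earlier results.
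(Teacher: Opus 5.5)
Your Steps 2 and 3 are exactly the paper's proof. Termination gives irreducible reducts $t_1',t_2'$, \cref{lem:nfterminaison} puts them in $\nf{\Sigma}$, and \cref{lem:nfispreserved,lem:nfnfterm} give $t_1'=\textup{NF}(t)=t_2'$. The paper simply asserts that these irreducible reducts lie in $\pp{\Sigma}$, so you are right to single this out as the crux.

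However, your Step 1 argument is wrong: layers are not protected. \eqref{ru:parasso} fires at the parent of a layer whenever the layer is the left operand of a $\otimes$. Take the preprocessed form of $g\otimes\textup{id}_m$, namely $\layer{0}{0}{g}\otimes\textup{id}_m=(\textup{id}_0\otimes(g\otimes\textup{id}_0))\otimes\textup{id}_m$. Its only redex is \eqref{ru:parasso} at the root, which produces $\textup{id}_0\otimes((g\otimes\textup{id}_0)\otimes\textup{id}_m)$, where $g$ is no longer inside a layer. A second \eqref{ru:parasso} gives $\textup{id}_0\otimes(g\otimes(\textup{id}_0\otimes\textup{id}_m))$. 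There \eqref{ru:par} applies with $u=g$ a bare generator, contrary to your claim about \eqref{ru:par}. So $\pp{\Sigma}$ is not closed under $\rewrite{\mathcal{R}}$, and your proof rests entirely on the fallback, which you leave unexecuted.

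The missing idea is an invariant of the closure strong enough to push irreducible terms back into $\pp{\Sigma}$. One that works: every occurrence of a generator is the left operand of a $\otimes$, and lies in the right operand $Y$ of some node $\textup{id}_k\otimes Y$.

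It holds on $\pp{\Sigma}$, and a rule-by-rule check shows it is preserved, for three reasons. The only variables that are left operands of $\otimes$ in a left-hand side are $t$ in \eqref{ru:parasso} and $u$ in \eqref{ru:par}, and both stay left operands. Variables in other positions cannot be bare generators. Every rule that destroys a node $\textup{id}_k\otimes Y$ with generators in $Y$ recreates such a node above them; these rules are \eqref{ru:parasso}, \eqref{ru:ididctx}, \eqref{ru:parseqidleft}, \eqref{ru:layercom} and \eqref{ru:layercomctx}.

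Now redo the case analysis from the proof of \cref{lem:nfterminaison} without the assumption $u,v\notin\Sigma$. It shows that an irreducible $\otimes$-node has one of the forms $\textup{id}_n\otimes g$, $g\otimes\textup{id}_n$ or $\layer{n}{m}{g}$. The invariant rules out the first form. It also forces the witnessing node $\textup{id}_k\otimes Y$ of each generator to be a layer containing that very generator. Hence irreducible reducts do lie in $\pp{\Sigma}$, \cref{lem:nfterminaison} applies, and your Step 3 goes through.

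Finally, the proofs of \cref{prop:proterminaison,lem:nfispreserved} argue one step at a time in preprocessed contexts. Their per-rule facts must therefore also be checked at the intermediate terms outside $\pp{\Sigma}$.
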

\begin{proof}
    Let $t\in \pp{\Sigma}$. If $t\rewrite{\mathcal{R}}^*u$ and $t\rewrite{\mathcal{R}}^*v$, then by \cref{prop:proterminaison} there exist $u',v'\in\pp{\Sigma}$ such that $u\rewrite{\mathcal{R}}^*u'\notrewrite{\mathcal{R}}$ and $v\rewrite{\mathcal{R}}^*v'\notrewrite{\mathcal{R}}$. \cref{lem:nfnfterm,lem:nfispreserved,lem:nfterminaison} imply $\textup{NF}(t)=u'=v'$, which concludes the proof. \qed
\end{proof}

As stated in the following theorem, the termination and confluence of $\mathcal{R}$ solve our problem: given two diagrammatically equivalent terms $t_1,t_2\in\pp{\Sigma}$, we can effectively rewrite $t_1$ into $t_2$ using the coherence equations.

\begin{theorem}
    $\textup{NF}(t_1)=\textup{NF}(t_2)\implies t_1\bothmrewrite{\mathcal{R}} t_2$ for any $t_1,t_2\in\pp{\Sigma}$.
\end{theorem}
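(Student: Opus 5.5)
The plan is to show that each $t_i$ rewrites to its own normal form, $t_i\mrewrite{\mathcal{R}}\textup{NF}(t_i)$, and then join the two sequences through the common normal form. Here $\bothmrewrite{\mathcal{R}}$ is read as the reflexive–transitive–symmetric closure, i.e.\ convertibility.

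\textbf{Step 1: each term reaches a normal form.} Fix $i\in\{1,2\}$. By \cref{prop:proterminaison} there is a finite maximal rewrite sequence $t_i\mrewrite{\mathcal{R}}t_i'$ with $t_i'\notrewrite{\mathcal{R}}$.

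To apply \cref{lem:nfterminaison} to $t_i'$, I first need $t_i'\in\pp{\Sigma}$. So I check that rewriting preserves preprocessed terms:
\begin{itemize}
\item no rule in \cref{fig:prors} introduces a bare generator;
\item the rules that create generators (\labelcref{ru:layercom,ru:layercomctx}) produce them only inside layers;
\item every rule either rearranges existing subterms or introduces identities.
\end{itemize}
The paper already implicitly relies on this closure property in \cref{prop:proconfluence}. With it, \cref{lem:nfterminaison} gives $t_i'\in\nf{\Sigma}$.

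\textbf{Step 2: that normal form is $\textup{NF}(t_i)$.} \cref{lem:nfispreserved} gives $\textup{NF}(t_i)=\textup{NF}(t_i')$. Since $t_i'\in\nf{\Sigma}$, \cref{lem:nfnfterm} gives $\textup{NF}(t_i')=t_i'$. Hence $t_i\mrewrite{\mathcal{R}}\textup{NF}(t_i)$.

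\textbf{Step 3: join the two sequences.} Under the hypothesis $\textup{NF}(t_1)=\textup{NF}(t_2)$, we have
\[
t_1\mrewrite{\mathcal{R}}\textup{NF}(t_1)=\textup{NF}(t_2)\;{}^{*}\!\!\longleftarrow_{\mathcal{R}} t_2 .
\]
This is a valley, hence in particular a conversion $t_1\bothmrewrite{\mathcal{R}}t_2$. It is in fact a joinability statement, which is the stronger form that makes the procedure effective: normalize both terms and compare the results.

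\textbf{Main obstacle.} The only non-bookkeeping point is the closure of $\pp{\Sigma}$ under $\rewrite{\mathcal{R}}$, which both \cref{lem:nfterminaison} and \cref{lem:nfispreserved} need along the whole sequence. I would prove it by a case check over the eleven rules combined with induction on the context $K[\cdot]$. The one subtle case is \eqref{ru:par}: it splits $u\otimes v$ into $(u\otimes\textup{id})\semicolon(\textup{id}\otimes v)$, and I need $u$ and $v$ not to be bare generators. This holds because the left-hand side was already in $\pp{\Sigma}$, so $u$ and $v$ are themselves preprocessed subterms.
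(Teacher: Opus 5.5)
Your proof is correct and follows the paper's argument: \cref{prop:proterminaison} together with \cref{lem:nfterminaison,lem:nfispreserved,lem:nfnfterm} gives $t_i\mrewrite{\mathcal{R}}\textup{NF}(t_i)$, and the shared normal form joins the two rewrite sequences. You also check explicitly that $\pp{\Sigma}$ is closed under $\rewrite{\mathcal{R}}$, a step the paper leaves implicit. One correction there: in \eqref{ru:par}, $u$ and $v$ are preprocessed not because subterms of preprocessed terms always are (they are not, e.g.\ $g\otimes\textup{id}_\ell$ inside a layer), but because the side condition $u,v\neq\textup{id}$ prevents the redex from lying inside a layer.
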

\begin{proof}
    \cref{prop:proterminaison,lem:nfnfterm,lem:nfispreserved,lem:nfterminaison} imply that $t_i\rewrite{\mathcal{R}}^*\textup{NF}(t_i)$, and the assumption $\textup{NF}(t_1)=\textup{NF}(t_2)$ yields $t_1\bothmrewrite{\mathcal{R}} t_2$.
    \qed
\end{proof}

While the rewrite rules can be applied in any order, it may be more efficient to follow a predefined rewriting strategy. In particular, \labelcref{ru:seqasso,ru:parasso,ru:seqidleft,ru:seqidright,ru:idid,ru:ididctx,ru:par,ru:parseqidleft,ru:parseqidright} allow transforming any preprocessed term into a pre-normal form, and \labelcref{ru:layercom,ru:layercomctx} allow transforming the layers to fulfill the normal form conditions. Hence, the two steps can be applied one after the other.

Notice that the normal form map $\textup{NF}$ is useful on its own.
Indeed, checking diagrammatic equivalence between two terms $t_1,t_2\in\pp{\Sigma}$ can be directly achieved by checking $\textup{NF}(t_1)=\textup{NF}(t_2)$.
Nevertheless, as we plan to integrate our TRSs in an overall rewriting-based certification pipeline, the TRS is more interesting to us than the normal form itself.


\section{Term Rewriting in the PRO of Permutations}\label{sec:perm}

In this section, we consider the PRO of permutations $\cat{PRO}_{\Sigma_0,\mathcal{E}_0}$, whose generators are the swaps, i.e. $\Sigma_0\defeq\{\sigma_{n,m}:n,m\in\N\}$ and where $\mathcal{E}_0$ is the set containing \cref{eq:emptysymmetry,eq:involution,eq:naturality,eq:symmetryinduct}. $\cat{PRO}_{\Sigma_0,\mathcal{E}_0}$ is the simplest PROP that one can think about, namely $\cat{PROP}_\varnothing$. Intuitively, in this structure, the diagrams have no boxes and are just made of wires, which allows expressing any permutation. Notice that all terms in $\terms{\Sigma_0}$ have an equal number of inputs and outputs. We solve the underlying diagrammatic equivalence problem: given two terms $t_1,t_2\in\terms{\Sigma_0}$ expressing the same permutation, how to effectively rewrite $t_1$ into $t_2$ using the coherence equations (\cref{eq:seqidentity,eq:seqassociativity,eq:parassociativity,eq:idinduct,eq:paridentity,eq:interchange,eq:emptysymmetry,eq:involution,eq:naturality,eq:symmetryinduct})? To do so, we refine the normal form of \cref{def:nf} to define a normal form for the PRO of permutations---that we call \emph{canonical form} to prevent confusion---and provide an associated rewriting system that we prove to be terminating and confluent.

\subsection{Canonical Form and Interpretation}

The permutations of size $n\in\N$, whose set is denoted $\cat{Perm}(n)$, are the bijective maps from $\natset{n}$ to itself. A permutation ${\pi\in\cat{Perm}(n)}$ is denoted $(\pi(1),\dots,\pi(n))$. For instance, ${(2,1,3)\in\cat{Perm}(3)}$ is the permutation mapping $1$ to $2$, $2$ to $1$ and $3$ to itself. There is only one permutation of size $0$, denoted $(\,)$, and only one permutation of size $1$, denoted $(1)$. In the following, $|\pi|$ denotes the size of the permutation $\pi$ and $\pi^{-1}$ its inverse. Two permutations $\pi_1,\pi_2\in\cat{Perm}(n)$ can be composed sequentially with $\circ$ as the usual composition, i.e.~$(\pi_2\circ\pi_1)(i)\defeq\pi_2(\pi_1(i))$ for any $i\in\natset{n}$. Two permutations $\pi_1\in\cat{Perm}(n),\pi_2\in\cat{Perm}(m)$ can be composed in parallel with the direct sum $\oplus$ defined as follows for any $i\in\natset{n+m}$.
\begin{gather*}
    (\pi_1\oplus\pi_2)(i)\defeq\begin{cases}
        \pi_1(i) &\text{if}\; 1\le i\le n\\
        n+\pi_2(i-n) &\text{if}\; n<i\le n+m
    \end{cases}
\end{gather*}

Let $\cat{Perm}\defeq\cup_{n\in\N}\cat{Perm}(n)$ be the collection of all permutations. Then, each term $t:n\to n\in\terms{\Sigma_0}$ can be interpreted as a permutation of size $n$.

\begin{definition}[interpretation]
    Let $\interp{\cdot}:\terms{\Sigma_0}\to\cat{Perm}$ be the \emph{interpretation}, inductively defined as follows.
    \begin{gather*}
        \interp{\textup{id}_n}\defeq (1,\dots,n)\hspace{3em}
        \interp{\sigma_{n,m}}\defeq (n+1,\dots,n+m,1,\dots,n)\\
        \interp{u\semicolon v}\defeq\interp{v}\circ\interp{u} \hspace{3em}
        \interp{u\otimes v}\defeq\interp{u}\oplus\interp{v}
    \end{gather*}
\end{definition}

Given $d\in\N$, let $\tau_d \defeq \sigma_{d,1}$ be the \emph{toboggan} of size $d$. Intuitively, the diagrammatic representation of $\tau_d$ lifts its last wire to the top, and its interpretation is $\interp{\tau_d}=(2,\dots,d+1,1)$, which implies the following equality.
\begin{gather*}
    \interp{\layer{k}{\ell}{\tau_d}}=(1,\dots,k,k+2,\dots,k+d+1,k+1,k+d+2,\dots,k+d+1+\ell)
\end{gather*}

We now define the \emph{canonical form} and its associated canonizing map. Given $ks,\ell s,ds\in\N^p$ with $p>0$, we introduce the following notation.
\begin{gather*}
    \clayers{ks}{\ell s}{ds}\defeq \layers{ks}{\ell s}{[\tau_{ds_i}]_{i\in\natset{p}}}
\end{gather*}

\begin{definition}[canonical form]\label{def:cf}
    The set $\cf{\Sigma_0}$ of terms in \emph{canonical form} is inductively defined as follows: $\textup{id}_n$ is in canonical form ; $\layer{k}{\ell}{\tau_d}$ is in canonical form if $d\ge1$ ; and $\clayers{k\first ks}{\ell\first \ell s}{d\first ds}$ is in canonical form if $d\ge1$ and $\clayers{ks}{\ell s}{ds}$ is in canonical form while satisfying $k<ks_1$.
\end{definition}

\begin{example}\label{ex:examplecf}
    The term $(\textup{id}_1\otimes\tau_1\otimes\tau_1)\semicolon(\sigma_{2,2}\otimes\textup{id}_1)\semicolon(\sigma_{1,2}\otimes\tau_1)\semicolon(\textup{id}_1\otimes\sigma_{1,2}\otimes\textup{id}_1)\semicolon(\textup{id}_3\otimes\tau_1)\semicolon(\tau_3\otimes\textup{id}_1)$ is equivalent to $\clayers{[0,1,2,3]}{[2,0,1,0]}{[2,3,1,1]}$, which is in canonical form. These terms can be graphically depicted as follows.
    \begin{gather*}
        \tf{cf_example_left} = \tf{cf_example_right}
    \end{gather*}
\end{example}

\begin{definition}[canonizing map]
    Let $\textup{CF}:\cat{Perm}\to \cf{\Sigma_0}$ be the \emph{canonizing map} that assigns to each permutation $\pi\in\cat{Perm}(n)$ its term in canonical form $\textup{CF}(\pi):n\to n$, inductively defined as
    \vspace{-0.5em}
    \begin{gather*}
        \textup{CF}((\;))\defeq \textup{id}_0 \hspace{3em}
        \textup{CF}((1))\defeq \textup{id}_1 \\
        \textup{CF}(\pi)\defeq\begin{cases}
            \textup{id}_1\star\textup{CF}(\pi|_1) & \text{if}\;\pi(1)=1\\
            \layer{0}{|\pi|-\pi^{-1}(1)}{\tau_{\pi^{-1}(1)-1}}\bullet(\textup{id}_1\star\textup{CF}(\pi|_1)) & \text{otherwise}
        \end{cases} 
    \end{gather*}
    where $\pi|_1\in\cat{Perm}(|\pi|-1)$ is the permutation satisfying $(\pi|_1)(i)\defeq\pi(i)-1$ when $1\le i<\pi^{-1}(1)$ and $(\pi|_1)(i)\defeq\pi(i+1)-1$ when $\pi^{-1}(1)\le i\le|\pi|-1$.
\end{definition}

Notice that $\cf{\Sigma_0}\subset\nf{\Sigma_0}\subset\pp{\Sigma_0}\subset\terms{\Sigma_0}$. Moreover, as stated in the following lemma, the canonizing map's definition is compatible with $\interp{\cdot}$.

\begin{lemma}\label{lem:cfcfterm}
    $\textup{CF}(\interp{t})=t$ for any $t\in \cf{\Sigma_0}$.
\end{lemma}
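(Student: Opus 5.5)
We argue by induction on the size $n$ of $t:n\to n\in\cf{\Sigma_0}$, or equivalently on the number of layers of $t$ together with $n$. If $t=\textup{id}_n$, then $\interp{t}$ is the identity permutation. Unfolding $\textup{CF}$ repeatedly along the first branch gives $\textup{id}_1\star(\textup{id}_1\star\cdots\star\textup{id}_0)$, which is $\textup{id}_n$ by the clause $\textup{id}_n\star\textup{id}_m=\textup{id}_{n+m}$. For $n\le1$ this holds by definition.

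Otherwise $t=\clayers{k\first ks}{\ell\first\ell s}{d\first ds}$ with $d\ge1$, and the constraint $k<ks_1$ holds at every consecutive pair. The key structural observation is that the upper widths are strictly increasing. So the layer $\layer{k}{\ell}{\tau_d}$ is the unique layer touching the wire at height $k+1$, and all subsequent layers act only on wires strictly below $k+1$. We first split on $k$.

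If $k>0$, every layer has upper width at least $1$, so $t=\textup{id}_1\star t'$, where $t'$ is obtained by subtracting $1$ from all upper widths; $t'$ is again in canonical form. Then $\interp{t}=(1)\oplus\interp{t'}$, so $\pi(1)=1$ and $\pi|_1=\interp{t'}$. Applying the induction hypothesis to $t'$, we get $\textup{CF}(\interp{t})=\textup{id}_1\star\textup{CF}(\interp{t'})=\textup{id}_1\star t'=t$. This uses the definition of $\star$ on $\textup{id}_1$ and a layer sequence, which shifts $ks$ by $1$.

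If $k=0$, set $s\defeq \clayers{ks}{\ell s}{ds}$, or $s\defeq\textup{id}_n$ if there is no further layer. Since $ks_1\ge1$, we again have $s=\textup{id}_1\star s'$ with $s'\in\cf{\Sigma_0}$. Then $\interp{t}=\interp{s}\circ\interp{\layer{0}{\ell}{\tau_d}}$. The toboggan sends wire $d+1$ to position $1$, and $s$ fixes $1$, so $\pi^{-1}(1)=d+1\ge2$. Hence $\pi(1)\ne1$, and the toboggan index $\pi^{-1}(1)-1=d$ and lower width $|\pi|-\pi^{-1}(1)=\ell$ agree with the second branch of $\textup{CF}$. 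We then verify the combinatorial identity $\pi|_1=\interp{s'}$. Deleting the wire that ends at $1$ from the composite leaves exactly the action of $s'$, because the toboggan just reorders the remaining wires into their original relative order. This is a direct index computation using the explicit formula for $\interp{\layer{k}{\ell}{\tau_d}}$. The induction hypothesis gives $\textup{CF}(\pi|_1)=s'$, so $\textup{CF}(\pi)=\layer{0}{\ell}{\tau_d}\bullet(\textup{id}_1\star s')=\layer{0}{\ell}{\tau_d}\bullet s$. Finally, $\bullet$ takes its first case because $0<ks_1+\textup{dom}(\tau_{ds_1})$, or by the $\textup{id}$ clause when $s$ is an identity. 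It therefore returns $\layers{0\first ks}{\ell\first\ell s}{\tau_d\first\cdots}=t$, in the same way as in \cref{lem:nfnfterm}.

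The main obstacle is the index bookkeeping in $\pi|_1=\interp{s'}$. It requires a case split on whether $i<d+1$ or $i\ge d+1$, together with the shift by $1$ coming from $\textup{id}_1\star$. We would discharge this in Isabelle/HOL, as was done for the earlier lemmas.
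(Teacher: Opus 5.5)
Your proof is correct and takes essentially the same route as the paper. Both arguments strip the fixed top wires through the first branch of $\textup{CF}$ (via $\interp{\textup{id}_1\star t'}=(1)\oplus\interp{t'}$), match the leading toboggan with the second branch, and close with the induction hypothesis and the first case of $\bullet$. The only difference is bookkeeping: the paper removes all $k$ wires at once with $\textup{id}_k\star$ and inducts on the number of layers, while you remove one wire at a time, induct on the size, and spell out the index check $\pi|_1=\interp{s'}$ that the paper leaves implicit.
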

\begin{proof}
    In the following we use the notation $\pi^{\oplus n}\defeq \pi\oplus\dots\oplus\pi$ (with $n$ compositions $\oplus$). 
    If $t=\textup{id}_n$, then $\textup{CF}(\interp{\textup{id}_n})=\textup{CF}((1)^{\oplus n})=\textup{id}_n$. Otherwise, we proceed by induction on the number of layers. If $t=\layer{k}{\ell}{\tau_d}$, then $\textup{CF}(\interp{\layer{k}{\ell}{\tau_d}})=\textup{id}_k\star\textup{CF}(\interp{\layer{0}{\ell}{\tau_d}})=\textup{id}_k\star(\layer{0}{\ell}{\tau_d}\bullet(\textup{id}_1\star\textup{CF}((1)^{\oplus \ell+d})))=\layer{k}{\ell}{\tau_d}$. Moreover, if $t=\clayers{k\first ks}{\ell\first \ell s}{d\first ds}$, then we have the following proof.
    \begin{minipage}[b]{0.95\linewidth}
    \begin{gather*}
        \textup{CF}(\interp{C(k\first ks,\ell\first \ell s,d\first ds)})
        =\textup{id}_k\star\textup{CF}(\interp{C(0\first (ks\listminus k),\ell\first \ell s,d\first ds)}) \\
        =\textup{id}_k\star(\layer{0}{\ell+d-d}{\tau_d}\bullet(\textup{id}_1\star\textup{CF}(\interp{C(ks\listminus k\listminus 1,\ell s,ds)}))) \\[-0.4em]
        \overset{\textup{IH}}{=}\textup{id}_k\star(\layer{0}{\ell}{\tau_d}\bullet (\textup{id}_1\star C(ks\listminus k\listminus 1,\ell s,ds))) 
        =C(k\first ks,\ell\first \ell s,d\first ds)
    \end{gather*}
\end{minipage}%
\begin{minipage}[b]{0.05\linewidth}
    \qed
\end{minipage}
\end{proof}

\subsection{Terminating and Confluent Rewriting System}

We introduce the new rewrite rules in \cref{fig:permrs}. The rules \labelcref{ru:seqasso,ru:parasso,ru:seqidleft,ru:seqidright,ru:idid,ru:ididctx,ru:par,ru:parseqidleft,ru:parseqidright} (see \cref{fig:prors}) together with the rules \labelcref{ru:swap:swap0left,ru:swap:swap0right,ru:swap:swapreduce,ru:swap:layernat,ru:swap:layercon,ru:swap:layerfus,ru:swap:layercom,ru:swap:layernatctx,ru:swap:layerconctx,ru:swap:layerfusctx,ru:swap:layercomctx} form the rewriting system $\mathcal{R}_0$. Notice that $\mathcal{R}_0$ contains all rules of $\mathcal{R}$ except \labelcref{ru:layercom,ru:layercomctx}, which have been replaced by the rules \labelcref{ru:swap:layernat,ru:swap:layercon,ru:swap:layerfus,ru:swap:layercom,ru:swap:layernatctx,ru:swap:layerconctx,ru:swap:layerfusctx,ru:swap:layercomctx}. 

\begin{remark}
    \labelcref{ru:swap:layernat,,ru:swap:layercon,,ru:swap:layerfus,,ru:swap:layercom} explain how to transform a sequence of two layers $\layer{k_1}{\ell_1}{\tau_{d_1}}$ and $\layer{k_2}{\ell_2}{\tau_{d_2}}$ whenever $k_1\ge k_2$, and \labelcref{ru:swap:layernatctx,,ru:swap:layerconctx,,ru:swap:layerfusctx,,ru:swap:layercomctx} apply the same transformations when more layers follow. Here are some graphical examples.
    \begin{gather*}
        \tinytf{layernat_example_left}\rewrite{\eqref{ru:swap:layernat}}\tinytf{layernat_example_right} \hspace{3em}
        \tinytf{layercon_example_left}\rewrite{\eqref{ru:swap:layercon}}\tinytf{layercon_example_right} \\[0.2em]
        \tinytf{layerfus_example_left}\rewrite{\eqref{ru:swap:layerfus}}\tinytf{layerfus_example_right} \hspace{3em}
        \tinytf{layercom_example_left}\rewrite{\eqref{ru:swap:layercom}}\tinytf{layercom_example_right}
    \end{gather*}
\end{remark}

As stated in the following proposition, the rewriting system $\mathcal{R}_0$ is sound with respect to the coherence equations.
\begin{proposition}\label{prop:permsoundness}
    $t_1\mrewrite{\mathcal{R}_0}t_2\implies\cat{PRO}_{\Sigma_0,\mathcal{E}_0}\vdash t_1=t_2$ for any $t_1,t_2\in\pp{\Sigma_0}$.
\end{proposition}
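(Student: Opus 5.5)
We follow the same strategy as \cref{prop:prosoundness}. By induction on the length of the rewrite sequence, and since $\vdash$ is an equivalence relation, it suffices to treat a single step $K[L]\rewrite{\mathcal{R}_0}K[R]$. We then do induction on the context $K[\cdot]\in\ppcontexts{\Sigma_0}{\textup{dom}(L)}{\textup{cod}(L)}$. The inductive cases $K[\cdot]=K'[\cdot]\semicolon w$, $w\semicolon K'[\cdot]$, $K'[\cdot]\otimes w$ and $w\otimes K'[\cdot]$ follow because $\vdash$ is closed under context. This leaves the base case $K[\cdot]=[\cdot]$, i.e.\ proving $\cat{PRO}_{\Sigma_0,\mathcal{E}_0}\vdash L=R$ for each rule.

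For the rules shared with $\mathcal{R}$, namely \labelcref{ru:seqasso,ru:parasso,ru:seqidleft,ru:seqidright,ru:idid,ru:ididctx,ru:par,ru:parseqidleft,ru:parseqidright}, we reuse the argument of \cref{prop:prosoundness}. It only uses \cref{eq:seqidentity,eq:seqassociativity,eq:parassociativity,eq:idinduct,eq:paridentity,eq:interchange}, which are still available here. The rules on swaps, \labelcref{ru:swap:swap0left,ru:swap:swap0right}, are instances of \cref{eq:emptysymmetry}. \eqref{ru:swap:swapreduce} decomposes $\sigma_{n,m}$ into toboggans and should follow from \cref{eq:symmetryinduct} by induction on $n$ (or $m$), splitting $\sigma_{n+1,m}$ as in \cref{eq:symmetryinduct} with $k=1$ and $\ell=0$ and simplifying with \cref{eq:emptysymmetry,eq:paridentity,eq:idinduct}.

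The contextual rules \labelcref{ru:swap:layernatctx,ru:swap:layerconctx,ru:swap:layerfusctx,ru:swap:layercomctx} reduce to their non-contextual counterparts \labelcref{ru:swap:layernat,ru:swap:layercon,ru:swap:layerfus,ru:swap:layercom}. Indeed, $L'\semicolon(L''\semicolon t)=(L'\semicolon L'')\semicolon t$ by \cref{eq:seqassociativity}, and the equality then holds by closure under context. For \eqref{ru:swap:layercom}, the case $k_1\ge k_2+d_2+1$, the two toboggans act on disjoint wires and we argue exactly as for \eqref{ru:layercom}, using \cref{eq:interchange,eq:idinduct,eq:parassociativity,eq:seqidentity}.

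The main obstacle is the three rules where the layers overlap, $k_2\le k_1<k_2+d_2+1$: naturality \eqref{ru:swap:layernat}, contraction \eqref{ru:swap:layercon} and fusion \eqref{ru:swap:layerfus}. For each one, we first use \cref{eq:idinduct,eq:parassociativity,eq:interchange} to isolate the common block of $k_2$ wires above and the common wires below, so that both sides become $\textup{id}_{k_2}\otimes(\text{core})\otimes\textup{id}_{\ell}$. The cores are then compared as follows.
\begin{itemize}
\item For naturality, the toboggan $\tau_{d_1}$ sits strictly inside the wires traversed by $\tau_{d_2}$. We split $\tau_{d_2}$ with \cref{eq:symmetryinduct} into a swap over the relevant sub-registers, and apply \cref{eq:naturality} with $t=\textup{id}\otimes\tau_{d_1}\otimes\textup{id}$ to slide it through.
\item For contraction, the lifted wires cross twice, so splitting both toboggans via \cref{eq:symmetryinduct} exposes a factor $\sigma_{1,1}\semicolon\sigma_{1,1}$, which \cref{eq:involution} cancels.
\item For fusion, two consecutive toboggans with adjacent ranges merge into a single $\sigma_{d,2}$-shaped block, and \cref{eq:symmetryinduct} recombines it into two toboggans of the other order.
\end{itemize}
I would first prove auxiliary lemmas such as $\tau_{a+b}=(\textup{id}_a\otimes\tau_b)\semicolon(\tau_a\otimes\textup{id}_1)$, derived from \cref{eq:symmetryinduct} with $k=0$, and then derive all three cases from them by induction on the width difference $k_1-k_2$. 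The bookkeeping of indices in these cases is where I expect the real difficulty. I would cross-check each case with $\interp{\cdot}$ before attempting the formal derivation, and verify the index arithmetic in Isabelle/HOL.
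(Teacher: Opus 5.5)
Your skeleton matches the paper's proof: reduce to a single step, induct on the context so that only $K[\cdot]=[\cdot]$ remains, reuse \cref{prop:prosoundness} for the rules shared with $\mathcal{R}$, and derive the swap rules from \cref{eq:emptysymmetry,eq:involution,eq:naturality,eq:symmetryinduct}. The paper gives no per-rule detail, so the rule-level derivations are the real content. There your sketch has the two instances of \cref{eq:symmetryinduct} you need interchanged.

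\eqref{ru:swap:swapreduce} only peels off one toboggan. It is not a full decomposition and needs no induction. It is the instance for $\sigma_{(n+1)+0,(m+1)+1}$, i.e.\ $k=0$, $\ell=1$, which splits the bottom register. After that, \cref{eq:emptysymmetry} turns $\sigma_{0,m+1}$ and $\sigma_{0,1}$ into identities, and the last factor is literally $\textup{id}_{m+1}\otimes\tau_{n+1}$; tidy up with \cref{eq:idinduct,eq:paridentity,eq:seqidentity}. Your choice $k=1$, $\ell=0$ splits the top register instead. It gives $(\textup{id}_n\otimes\sigma_{1,m+2})\semicolon(\sigma_{n,m+2}\otimes\textup{id}_1)$, which is not the right-hand side.

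Conversely, your auxiliary lemma is the instance for $\sigma_{a+b,1+0}$ ($k=b$, $\ell=0$). It reads $\tau_{a+b}=(\textup{id}_a\otimes\tau_b)\semicolon(\tau_a\otimes\textup{id}_b)$. As you wrote it, with $\textup{id}_1$, it is ill-typed unless $b=1$. Taking $k=0$ only splits the one-wire bottom register, which yields a trivial identity.

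Your fusion bullet also aims at the wrong equation. The right-hand side of \eqref{ru:swap:layerfus} is the single toboggan $\layer{k_2}{\ell_1}{\tau_{d_1+d_2}}$, not two toboggans in the other order, and no $\sigma_{d,2}$ block appears. Since $k_1=k_2+d_2$ forces $\ell_2=d_1+\ell_1$, the left-hand side equals $\textup{id}_{k_2}\otimes((\textup{id}_{d_2}\otimes\tau_{d_1})\semicolon(\tau_{d_2}\otimes\textup{id}_{d_1}))\otimes\textup{id}_{\ell_1}$ by \cref{eq:idinduct,eq:parassociativity,eq:interchange,eq:seqidentity}. The corrected lemma with $a=d_2$, $b=d_1$ then finishes the case.

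Two smaller points:
\begin{itemize}
\item \eqref{ru:swap:layernat} needs no splitting. It is a single application of \cref{eq:naturality} with $k=1$ and $t=\textup{id}_{k_1-k_2}\otimes\tau_{d_1}\otimes\textup{id}_{k_2+d_2-k_1-d_1-1}:d_2\to d_2$. The side condition is exactly what makes $t$ fit above the lifted wire.
\item For \eqref{ru:swap:layercon}, splitting both toboggans does not by itself make the two crossings of the lifted wires adjacent once $d_2-(k_1-k_2)\ge 2$. You must first slide the rest of the first toboggan past the second lift with \cref{eq:naturality}, and only then does \cref{eq:involution} apply.
\end{itemize}
With these repairs your plan does establish the proposition.
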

\begin{proof}
    For any rewrite rule $(L\rewrite{}R)\in\mathcal{R}_0$ and any $K[\cdot]\in\ppcontexts{\Sigma_0}{\textup{dom}(L)}{\textup{cod}(L)}$, we prove $\cat{PRO}_{\Sigma_0,\mathcal{E}_0}\vdash K[L]=K[R]$ by induction on $K[\cdot]$ where the only non-trivial case is $K[\cdot]=[\cdot]$. For most rewrite rules this is trivial; the only interesting cases \labelcref{ru:swap:swapreduce,ru:swap:layernat,ru:swap:layercon,ru:swap:layerfus,ru:swap:layercom,ru:swap:layernatctx,ru:swap:layerconctx,ru:swap:layerfusctx,ru:swap:layercomctx} are proved using, in particular, \cref{eq:involution,eq:naturality,eq:symmetryinduct}. \qed
\end{proof}

\begin{figure}[t]
    \fbox{\begin{minipage}{0.982\linewidth}\begin{center}
        \vspace{-1em}
        \hspace{-0.8em}
        \begin{subfigure}{0.26\textwidth}
            \begin{equation}\label{ru:swap:swap0left}\tag{R12}
                \sigma_{0,n} \longrightarrow \textup{id}_n
            \end{equation}
        \end{subfigure}\hspace{3em}
        \begin{subfigure}{0.26\textwidth}
            \begin{equation}\label{ru:swap:swap0right}\tag{R13}
                \sigma_{n,0} \longrightarrow \textup{id}_n
            \end{equation}
        \end{subfigure}

        \vspace{-0.5em}
        \hspace{-0.8em}
        \begin{subfigure}{0.69\textwidth}
            \begin{equation}\label{ru:swap:swapreduce}\tag{R14}
                \sigma_{n+1,m+2} \longrightarrow (\sigma_{n+1,m+1}\otimes\textup{id}_1)\semicolon(\textup{id}_{m+1}\otimes\tau_{n+1})
            \end{equation}
        \end{subfigure}        

        \vspace{-0.5em}
        \hspace{-0.8em}
        \begin{subfigure}{0.66\textwidth}
            \begin{equation}\label{ru:swap:layernat}\tag{R15}
                \begin{array}{c}
                    \layer{k_1}{\ell_1}{\tau_{d_1}}\semicolon\layer{k_2}{\ell_2}{\tau_{d_2}} \longrightarrow \layer{k_2}{\ell_2}{\tau_{d_2}}\semicolon \layer{k_1+1}{\ell_1-1}{\tau_{d_1}} \\
                    \text{if}\;\; d_1\ge1,d_2\ge1 \;\;\text{and}\;\; k_2\le k_1<k_2+d_2-d_1
                \end{array}
            \end{equation}
        \end{subfigure}

        \vspace{-0.5em}
        \hspace{-0.8em}
        \begin{subfigure}{0.76\textwidth}
            \begin{equation}\label{ru:swap:layernatctx}\tag{R16}
                \begin{array}{c}
                    \layer{k_1}{\ell_1}{\tau_{d_1}}\semicolon (\layer{k_2}{\ell_2}{\tau_{d_2}}\semicolon t) \longrightarrow \layer{k_2}{\ell_2}{\tau_{d_2}}\semicolon (\layer{k_1+1}{\ell_1-1}{\tau_{d_1}}\semicolon t) \\
                    \text{if}\;\; d_1\ge1,d_2\ge1 \;\;\text{and}\;\; k_2\le k_1<k_2+d_2-d_1
                \end{array}
            \end{equation}
        \end{subfigure}

        \vspace{-0.5em}
        \hspace{-0.8em}
        \begin{subfigure}{0.84\textwidth}
            \begin{equation}\label{ru:swap:layercon}\tag{R17}
                \begin{array}{c}
                    \layer{k_1}{\ell_1}{\tau_{d_1}}\semicolon\layer{k_2}{\ell_2}{\tau_{d_2}} \longrightarrow \layer{k_2}{\ell_2+1}{\tau_{d_2-1}}\semicolon \layer{k_1+1}{\ell_1}{\tau_{d_1-1}} \\
                    \text{if}\;\; d_1\ge1,d_2\ge1 \;\;\text{and}\;\; k_2\le k_1\wedge k_2+d_2-d_1\le k_1<k_2+d_2
                \end{array}
            \end{equation}
        \end{subfigure}

        \vspace{-0.5em}
        \hspace{-0.8em}
        \begin{subfigure}{0.84\textwidth}
            \begin{equation}\label{ru:swap:layerconctx}\tag{R18}
                \begin{array}{c}
                    \layer{k_1}{\ell_1}{\tau_{d_1}}\semicolon (\layer{k_2}{\ell_2}{\tau_{d_2}}\semicolon t) \longrightarrow \layer{k_2}{\ell_2+1}{\tau_{d_2-1}}\semicolon (\layer{k_1+1}{\ell_1}{\tau_{d_1-1}}\semicolon t) \\
                    \text{if}\;\; d_1\ge1,d_2\ge1 \;\;\text{and}\;\; k_2\le k_1\wedge k_2+d_2-d_1\le k_1<k_2+d_2
                \end{array}
            \end{equation}
        \end{subfigure}

        \vspace{-0.5em}
        \hspace{-0.8em}
        \begin{subfigure}{0.55\textwidth}
            \begin{equation}\label{ru:swap:layerfus}\tag{R19}
                \begin{array}{c}
                    \layer{k_1}{\ell_1}{\tau_{d_1}}\semicolon\layer{k_2}{\ell_2}{\tau_{d_2}} \longrightarrow \layer{k_2}{\ell_1}{\tau_{d_1+d_2}} \\
                    \text{if}\;\; d_1\ge1,d_2\ge1 \;\;\text{and}\;\; k_1=k_2+d_2
                \end{array}
            \end{equation}
        \end{subfigure}

        \vspace{-0.5em}
        \hspace{-0.8em}
        \begin{subfigure}{0.63\textwidth}
            \begin{equation}\label{ru:swap:layerfusctx}\tag{R20}
                \begin{array}{c}
                    \layer{k_1}{\ell_1}{\tau_{d_1}}\semicolon (\layer{k_2}{\ell_2}{\tau_{d_2}}\semicolon t) \longrightarrow \layer{k_2}{\ell_1}{\tau_{d_1+d_2}}\semicolon t \\
                    \text{if}\;\; d_1\ge1,d_2\ge1 \;\;\text{and}\;\; k_1=k_2+d_2
                \end{array}
            \end{equation}
        \end{subfigure}

        \vspace{-0.5em}
        \hspace{-0.8em}
        \begin{subfigure}{0.63\textwidth}
            \begin{equation}\label{ru:swap:layercom}\tag{R21}
                \begin{array}{c}
                    \layer{k_1}{\ell_1}{\tau_{d_1}}\semicolon\layer{k_2}{\ell_2}{\tau_{d_2}} \longrightarrow \layer{k_2}{\ell_2}{\tau_{d_2}}\semicolon \layer{k_1}{\ell_1}{\tau_{d_1}} \\
                    \text{if}\;\; d_1\ge1,d_2\ge1 \;\;\text{and}\;\; k_1>k_2+d_2
                \end{array}
            \end{equation}
        \end{subfigure}

        \vspace{-0.5em}
        \hspace{-0.8em}
        \begin{subfigure}{0.74\textwidth}
            \begin{equation}\label{ru:swap:layercomctx}\tag{R22}
                \begin{array}{c}
                    \layer{k_1}{\ell_1}{\tau_{d_1}}\semicolon (\layer{k_2}{\ell_2}{\tau_{d_2}}\semicolon t) \longrightarrow \layer{k_1}{\ell_1}{\tau_{d_1}}\semicolon (\layer{k_2}{\ell_2}{\tau_{d_2}}\semicolon t) \\
                    \text{if}\;\; d_1\ge1,d_2\ge1 \;\;\text{and}\;\; k_1>k_2+d_2
                \end{array}
            \end{equation}
        \end{subfigure}
        \vspace{0.2em}
    \end{center}\end{minipage}}
    \caption{The rewrite rules are defined for any $n,m,k_1,k_2,\ell_1,\ell_2,d_1,d_2\in\N$ and \labelcref{ru:swap:layernat,ru:swap:layercon,ru:swap:layerfus,ru:swap:layercom,ru:swap:layernatctx,ru:swap:layerconctx,ru:swap:layerfusctx,ru:swap:layercomctx} are conditioned.}
    \label{fig:permrs}
\end{figure}

In order to prove termination and confluence, we build upon the result of \cref{sec:pro} and follow a similar proof flow.

\begin{proposition}\label{prop:permterminaison}
    $\mathcal{R}_0$ terminates.
\end{proposition}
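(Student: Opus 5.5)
The plan is to reuse the strategy of \cref{prop:proterminaison}, now with the four quantities $\alpha,\beta,\gamma,\delta_D$ of \cref{fig:quantities}. I will show that every rewrite step of $\mathcal{R}_0$ strictly decreases the tuple $(\alpha,\beta,\gamma,\delta_D)$ in the lexicographic order. Each step will satisfy the pattern recorded in the bottom table of \cref{fig:quantities}. As before, I fix $D\defeq D(t_1)$, which is invariant along a rewrite sequence, and $D\ge2$. Since $\alpha,\beta,\gamma$ take values in $\N$, any infinite sequence must eventually keep them constant. It then remains to rule out an infinite strictly decreasing tail of $\delta_D$.

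First I treat the base case $K[\cdot]=[\cdot]$ rule by rule; this is where the Isabelle verification of equalities and inequalities goes.
\begin{itemize}
\item \labelcref{ru:seqasso,ru:parasso,ru:seqidleft,ru:seqidright,ru:idid,ru:ididctx,ru:par,ru:parseqidleft,ru:parseqidright} only move or erase identities and generators. They therefore preserve $\alpha$, which is additive over both compositions, and strictly decrease $\beta$ exactly as in \cref{prop:proterminaison}.
\item For \labelcref{ru:swap:swap0left,ru:swap:swap0right}, $\alpha$ drops from $1$ to $0$.
\item For \eqref{ru:swap:swapreduce}, the claim reduces to $(n+1)(m+2)^2+1>(n+1)(m+1)^2+1+(n+2)$, i.e.\ $(n+1)(2m+3)>n+2$. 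Here the toboggan on the right is counted with weight $n+2$, since $\alpha(\tau_{n+1})=(n+1)\cdot1+1$; this agrees with the layer value $d+1$.
\item \labelcref{ru:swap:layercon,ru:swap:layerconctx} decrease $\alpha$ by $2$, and \labelcref{ru:swap:layerfus,ru:swap:layerfusctx} decrease it by $1$.
\item \labelcref{ru:swap:layernat,ru:swap:layernatctx} keep the same toboggans and the same number of layers, so $\alpha$ and $\beta$ are unchanged. For $\gamma$, $k_1+4\ell_1$ becomes $(k_1+1)+4(\ell_1-1)$, a decrease of $3$.
\item \labelcref{ru:swap:layercom,ru:swap:layercomctx} preserve $\alpha,\beta,\gamma$. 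I read \eqref{ru:swap:layercomctx} as the intended commutation, with the two layers swapped on the right-hand side. On the two layer positions, $\delta_D$ goes from $k_1+\nicefrac{1}{D}k_2$ to $k_2+\nicefrac{1}{D}k_1$. Because $k_1\ge k_2+d_2+1\ge k_2+2$, the drop is $(1-\nicefrac1D)(k_1-k_2)\ge\nicefrac1D$.
\end{itemize}

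Next I lift these facts to arbitrary contexts $K[\cdot]\in\ppcontexts{\Sigma_0}{r}{s}$ by induction on $K$.
\begin{itemize}
\item For $\alpha$ and $\gamma$ the lift is immediate, since both are linear with positive coefficients.
\item For $\beta$, strict monotonicity holds because every value of $\beta$ is at least $2$, so $x\mapsto 2x+c$, $x\mapsto x+c$, $x\mapsto x^2c$ and $x\mapsto c^2x$ are strictly increasing. Equalities are preserved trivially in every case.
\item For $\delta_D$ there is a subtlety. A redex placed in the right argument of $\semicolon$ has its decrease multiplied by $\nicefrac1D$, so the uniform bound $\nicefrac1D$ is lost. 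I would fix this by noting that in the tail of the sequence where $\alpha,\beta,\gamma$ are constant, the term size is bounded, because $\beta$ is constant and dominates the size. Hence the $\semicolon$-nesting depth is bounded by some $B$, and each step decreases $\delta_D$ by at least $D^{-B}$.
\item The special case $\delta_D(\textup{id}_n\otimes v)=n+\delta_D(v)$ is harmless. The argument $v$ still enters additively, and no rule of this group turns an identity into a non-identity.
\end{itemize}

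The main obstacle, as in the remark after \cref{prop:proterminaison}, is the $\delta_D$ step. One must make sure the side condition of \labelcref{ru:swap:layercom,ru:swap:layercomctx} yields a gap uniformly bounded below, despite the contextual scaling. One must also check that $\gamma$ is genuinely preserved there: it is, since the pair $(k_i,\ell_i)$ attached to each toboggan is unchanged.

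Finally, combining the two parts: $\alpha,\beta,\gamma$ can only decrease finitely often, and $\delta_D\ge0$ decreases by at least a fixed positive constant on the constant tail. Hence every rewrite sequence of $\mathcal{R}_0$ is finite.
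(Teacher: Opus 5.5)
Your proposal is correct and follows the paper's proof: it uses the same lexicographic measure $(\alpha,\beta,\gamma,\delta_D)$ with $D=2$, and the same rule-by-rule pattern recorded in the bottom table of \cref{fig:quantities}, including reading \eqref{ru:swap:layercomctx} as the intended swap of the two layers. Your remark that a redex in the right argument of $\semicolon$ only yields a $\delta_D$-drop scaled by $\nicefrac{1}{D}$ is a genuine refinement, and so is your repair: term size is at most the $\beta$-value, so the $\semicolon$-nesting depth is bounded on the tail where $(\alpha,\beta,\gamma)$ is constant. The uniform bound $\nicefrac{1}{D}$ that the proof of \cref{prop:proterminaison} claims in every context does not actually follow by induction on $K[\cdot]$.
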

\begin{proof}
    We build upon the proof of \cref{prop:proterminaison} by defining two additional positive quantities $\alpha,\gamma:\terms{\Sigma_0}\to\N$ and show that $\mathcal{R}_0$ strictly decreases $(\alpha,\beta,\gamma,\delta_2)$ with lexicographical order (notice that, as $\textup{def}(g)=0$ for any $g\in\Sigma_0$, we can take $D=2$ for $\delta_D$). The definitions of $\alpha$, $\beta$, $\gamma$ and $\delta_2$ are given in \cref{fig:quantities} together with a summary of which rewrite rules strictly decrease or preserve each quantity. Again, all the base cases have been verified with Isabelle/HOL. \qed
\end{proof}

\begin{lemma}\label{lem:cfispreserved}
    $t_1\rewrite{\mathcal{R}_0}^* t_2\implies\interp{t_1}=\interp{t_2}$ for any $t_1,t_2\in \pp{\Sigma_0}$.
\end{lemma}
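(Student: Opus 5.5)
We follow the same pattern as the proof of \cref{lem:nfispreserved}. By induction on the length of the rewrite sequence, it suffices to treat a single step $t_1\rewrite{\mathcal{R}_0}t_2$, i.e.\ $t_1=K[L]$ and $t_2=K[R]$ for some rule $(L\rewrite{}R)\in\mathcal{R}_0$ and some context $K[\cdot]\in\ppcontexts{\Sigma_0}{\textup{dom}(L)}{\textup{cod}(L)}$. Since $\interp{\cdot}$ is defined compositionally ($\interp{u\semicolon v}=\interp{v}\circ\interp{u}$ and $\interp{u\otimes v}=\interp{u}\oplus\interp{v}$), a straightforward induction on $K[\cdot]$ reduces the claim to the base case $\interp{L}=\interp{R}$. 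In the inductive cases $K[\cdot]=K'[\cdot]\semicolon v$, $u\semicolon K'[\cdot]$, $K'[\cdot]\otimes v$ and $u\otimes K'[\cdot]$, the hypothesis $\interp{K'[L]}=\interp{K'[R]}$ is simply plugged into $\circ$ or $\oplus$.

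For the base case we split the rules into three groups.
\begin{itemize}
\item \textbf{Structural rules} \labelcref{ru:seqasso,ru:parasso,ru:seqidleft,ru:seqidright,ru:idid,ru:ididctx,ru:par,ru:parseqidleft,ru:parseqidright}. These follow from basic algebraic facts about permutations: associativity of $\circ$ and $\oplus$, neutrality of the identity permutations, $(1,\dots,n)\oplus(1,\dots,m)=(1,\dots,n+m)$, and the interchange law $(\pi_2\circ\pi_1)\oplus(\rho_2\circ\rho_1)=(\pi_2\oplus\rho_2)\circ(\pi_1\oplus\rho_1)$. Each of these is checked pointwise by splitting on whether $i\le n$ or $i>n$.
\item \textbf{Swap rules} \labelcref{ru:swap:swap0left,ru:swap:swap0right,ru:swap:swapreduce}. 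The first two are immediate from the definition of $\interp{\sigma_{n,m}}$. For \eqref{ru:swap:swapreduce}, we compute both sides pointwise on three ranges of $i$: $i\le n+1$, $n+1<i\le n+m+2$ excluding the last index, and $i=n+m+3$.
\item \textbf{Layer rules} \labelcref{ru:swap:layernat,ru:swap:layercon,ru:swap:layerfus,ru:swap:layercom} and their context variants. The context variants reduce to the plain ones, since $\interp{a\semicolon(b\semicolon t)}=\interp{t}\circ\interp{b}\circ\interp{a}$. For the plain rules we use the explicit formula for $\interp{\layer{k}{\ell}{\tau_d}}$ given above. This permutation fixes every index outside $[k+1,k+d+1]$, sends $k+1\mapsto k+d+1$... or rather sends index $k+d+1$ to position $k+1$ and shifts the indices $k+1,\dots,k+d$ up by one. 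We then compare the composites of both sides on each index $i$, splitting according to where $i$ lies relative to the two "active intervals" $[k_1+1,k_1+d_1+1]$ and $[k_2+1,k_2+d_2+1]$. The rule's side condition (for instance $k_2\le k_1<k_2+d_2-d_1$) determines how these intervals overlap.
\end{itemize}

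We expect the layer rules to be the main obstacle, particularly \eqref{ru:swap:layercon} and \eqref{ru:swap:layernat}. Here the intervals overlap nontrivially, and the case analysis on $i$ has around five to seven regions per rule, each requiring careful off-by-one arithmetic. \eqref{ru:swap:layerfus} is a cleaner computation, since two adjacent toboggans compose into one larger toboggan. \eqref{ru:swap:layercom} is simply commutation of permutations with disjoint supports.

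To keep the layer computations manageable, we first prove a small lemma giving $\interp{\layer{k}{\ell}{\tau_d}}(i)$ in closed form by cases: it equals $i$ if $i\le k$ or $i>k+d+1$, equals $i+1$ if $k<i\le k+d$, and equals $k+1$ if $i=k+d+1$. With this lemma, each rule becomes a finite piecewise-linear identity that can be discharged mechanically, for instance by Isabelle's arithmetic automation after splitting into cases.
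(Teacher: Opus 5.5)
Your proposal is correct and is essentially the paper's proof: reduce to a single step $K[L]\rewrite{\mathcal{R}_0}K[R]$, induct on the context using the compositionality of $\interp{\cdot}$, and check $\interp{L}=\interp{R}$ rule by rule. The paper simply delegates these base cases to Isabelle/HOL.

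One caveat for when you actually carry out the base cases. Your closed form for $\interp{\layer{k}{\ell}{\tau_d}}$ and your range split for \eqref{ru:swap:swapreduce} presuppose $\interp{\sigma_{n,m}}=(m+1,\dots,m+n,1,\dots,m)$, which is the convention under which $\interp{\tau_d}=(2,\dots,d+1,1)$ holds. The definition as printed, $(n+1,\dots,n+m,1,\dots,n)$, is its inverse, and with it \eqref{ru:swap:swapreduce} already fails at $n=m=0$: the left side gives $(2,3,1)$ and the right side gives $(3,1,2)$. So you should fix the former convention explicitly.
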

\begin{proof}
    For any rewrite rule $(L\rewrite{}R)\in\mathcal{R}_0$ and any $K[\cdot]\in\ppcontexts{\Sigma_0}{\textup{dom}(L)}{\textup{cod}(L)}$, the equality $\interp{K[L]}=\interp{K[R]}$ is straightforwardly proved by induction on $K[\cdot]$ where the base cases $K[\cdot]=[\cdot]$ have been verified with Isabelle/HOL.\qed
\end{proof}

\begin{lemma}\label{lem:cfterminaison}
    $t\notrewrite{\mathcal{R}_0}\implies t\in \cf{\Sigma_0}$ for any $t\in \pp{\Sigma_0}$.
\end{lemma}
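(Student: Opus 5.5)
We follow the structure of the proof of \cref{lem:nfterminaison} and argue by induction on $t\in\pp{\Sigma_0}$. The base cases $t=\textup{id}_n$ are immediate, and $t\in\Sigma_0$ cannot occur since $t\notin\pp{\Sigma_0}$. The two inductive cases $t=u\semicolon v$ and $t=u\otimes v$ first reuse the reasoning of \cref{lem:nfterminaison}. That reasoning only invoked rules \labelcref{ru:seqasso,ru:parasso,ru:seqidleft,ru:seqidright,ru:idid,ru:ididctx,ru:par,ru:parseqidleft,ru:parseqidright}, which still belong to $\mathcal{R}_0$, together with the ordering condition given by \labelcref{ru:layercom,ru:layercomctx}, which we must now replace.

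\emph{Tensor case.} Exactly as before, $t\notrewrite{\mathcal{R}_0}$ forces $t=\textup{id}_k\otimes(g\otimes\textup{id}_\ell)=\layer{k}{\ell}{g}$ with $g\in\Sigma_0$. We then exploit irreducibility of the subterm $g=\sigma_{n,m}$. \eqref{ru:swap:swap0left} and \eqref{ru:swap:swap0right} exclude $n=0$ and $m=0$, and \eqref{ru:swap:swapreduce} excludes $m\ge2$. Hence $g=\sigma_{d,1}=\tau_d$ with $d\ge1$, and $\layer{k}{\ell}{\tau_d}\in\cf{\Sigma_0}$.

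\emph{Sequential case.} Since $u,v\in\pp{\Sigma_0}$ are irreducible, the induction hypothesis gives $u,v\in\cf{\Sigma_0}$. Rules \labelcref{ru:seqidleft,ru:seqidright} exclude identities, and \eqref{ru:seqasso} forces $u$ to be a single layer $\layer{k}{\ell}{\tau_d}$ with $d\ge1$. Then $v=\clayers{ks}{\ell s}{ds}$ with $ds_1\ge1$. It remains to show $k<ks_1$. We suppose $k\ge ks_1$ and write $k_2=ks_1$, $d_2=ds_1$. We then show that the conditions of \labelcref{ru:swap:layernat,ru:swap:layercon,ru:swap:layerfus,ru:swap:layercom} exhaust the region $k\ge k_2$ (each taken in its plain or ``ctx'' version, according to whether $v$ has one layer or more):
\begin{itemize}
\item $k_2\le k<k_2+d_2-d$ is the condition of \eqref{ru:swap:layernat};
\item $k_2+d_2-d\le k<k_2+d_2$ is the condition of \eqref{ru:swap:layercon};
\item $k=k_2+d_2$ is the condition of \eqref{ru:swap:layerfus};
\item $k>k_2+d_2$ is the condition of \eqref{ru:swap:layercom}.
\end{itemize}
These four intervals partition $[k_2,\infty)$. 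The first may be empty, in which case the second starts at $k_2$ and its extra condition $k_2\le k$ still holds. Since $d\ge1$ and $d_2\ge1$, some rule therefore applies, which contradicts irreducibility. Hence $k<ks_1$, and $t=\clayers{k\first ks}{\ell\first\ell s}{d\first ds}\in\cf{\Sigma_0}$.

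\emph{Main obstacle: rule \eqref{ru:swap:layercomctx}.} As printed, its right-hand side equals its left-hand side, which looks like a typo for a genuine swap analogous to \eqref{ru:swap:layercom}. Such a rule never yields irreducibility, so the exhaustion argument is unaffected. Note, however, that we need \eqref{ru:swap:layercomctx} to be applicable in the case $k>k_2+d_2$ with more layers following. The argument as stated therefore relies on reading that rule as matching, and not on whether it actually changes the term. The other delicate point is checking that the case analysis of the four conditions covers every integer $k\ge k_2$ with no gaps; this is elementary arithmetic and could be discharged in Isabelle like the other base cases.
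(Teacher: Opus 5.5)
Your proof is correct, but it is organized differently from the paper's. The paper does not redo the induction. It first uses (R12)--(R14) on all of $t$ to show that every generator is some $\tau_d$ with $d\ge1$. For such generators, (R10)/(R11) have exactly the applicability conditions of (R21)/(R22): since $\textup{dom}(\tau_{d_2})=d_2+1$, the condition $k_1\ge k_2+\textup{dom}(g_2)$ reads $k_1>k_2+d_2$, and all defects vanish. Hence $t\notrewrite{\mathcal{R}}$, and \cref{lem:nfterminaison} applies as a black box. This gives $t=\clayers{ks}{\ell s}{ds}\in\nf{\Sigma_0}$, i.e. $ks_i\le ks_{i+1}+ds_{i+1}$, so only (R15)--(R20) are needed to close the remaining window $ks_{i+1}\le ks_i\le ks_{i+1}+ds_{i+1}$. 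You instead rerun the structural induction with the stronger invariant $\cf{\Sigma_0}$, handle the generator locally in the tensor case, and cover the whole region $k\ge ks_1$ with one four-way split that also uses (R21)/(R22). The paper's route is shorter and reuses an already verified lemma, but it rests on transferring irreducibility from $\mathcal{R}_0$ to $\mathcal{R}$, which it states without detail. Yours stays entirely inside $\mathcal{R}_0$ and makes the interval coverage explicit, at the cost of duplicating the case analysis of \cref{lem:nfterminaison}. Your diagnosis of (R22) is also right: as printed it is a typo, and with right-hand side equal to left-hand side it would even contradict \cref{prop:permterminaison}. The paper's claim that (R11) ``exactly corresponds'' to (R22) likewise presupposes the corrected right-hand side. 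For this lemma only the left-hand side and side condition matter, as you say.
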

\begin{proof}
    All rewrite rules of $\mathcal{R}$ are in $\mathcal{R}_0$ except \labelcref{ru:layercom,ru:layercomctx}, which exactly correspond to \labelcref{ru:swap:layercom,ru:swap:layercomctx} in $\mathcal{R}_0$ when $g_i=\tau_{d_i}$ for some $d_i\ge1$. Moreover, every generator $\sigma_{n,m}$ in $t$ satisfies $n>0$ and $m=1$, otherwise a rule among \labelcref{ru:swap:swap0left,ru:swap:swap0right,ru:swap:swapreduce} applies. Thus, \cref{lem:nfterminaison} applies and we can assume that $t\in \nf{\Sigma_0}$. This implies that there exists $ks,\ell s,ds\in\N^p$ for some $p>0$ satisfying $ds_i\ge1$ for any $i\in\natset{p}$, such that $t=\clayers{ks}{\ell s}{ds}$. But then $\forall i\in\natset{p-1}:ks_i<ks_{i+1}$, otherwise a rule among \labelcref{ru:swap:layernat,ru:swap:layernatctx,ru:swap:layercon,ru:swap:layerconctx,ru:swap:layerfus,ru:swap:layerfusctx} applies. Hence, $t\in \cf{\Sigma_0}$. \qed
\end{proof}

\begin{proposition}\label{prop:permconfluence}
    $\mathcal{R}_0$ is confluent in $\pp{\Sigma_0}$.
\end{proposition}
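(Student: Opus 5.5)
We mirror the proof of \cref{prop:proconfluence}, with the interpretation $\interp{\cdot}$ and the canonizing map $\textup{CF}$ replacing $\textup{NF}$. Let $t\in\pp{\Sigma_0}$, and suppose $t\mrewrite{\mathcal{R}_0}u$ and $t\mrewrite{\mathcal{R}_0}v$. By \cref{prop:permterminaison} we can continue rewriting $u$ and $v$ until no rule applies, obtaining $u\mrewrite{\mathcal{R}_0}u'\notrewrite{\mathcal{R}_0}$ and $v\mrewrite{\mathcal{R}_0}v'\notrewrite{\mathcal{R}_0}$. It then suffices to prove $u'=v'$, since this term is a common reduct of $u$ and $v$.

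First we check that every term in these sequences stays in $\pp{\Sigma_0}$. This is needed because \cref{lem:cfispreserved,lem:cfterminaison} are stated only for preprocessed terms. We argue rule by rule. The right-hand sides never introduce a bare generator outside a layer, with one apparent exception: \eqref{ru:swap:swapreduce} produces $\sigma_{n+1,m+1}\otimes\textup{id}_1$ and $\textup{id}_{m+1}\otimes\tau_{n+1}$. We must therefore make sure these results are read as layers, or as terms on which the remaining rules act, consistently with how $\pp{\Sigma_0}$ is used for \cref{lem:nfterminaison}. Concretely, $\sigma_{n+1,m+1}\otimes\textup{id}_1$ should be understood as $\layer{0}{1}{\sigma_{n+1,m+1}}$, modulo \eqref{ru:paridentity}-style conventions. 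We read rewriting on $\pp{\Sigma_0}$ as the paper does, with contexts taken in $\ppcontexts{\Sigma_0}{\cdot}{\cdot}$, so we assume closure of $\pp{\Sigma_0}$ under $\rewrite{\mathcal{R}_0}$ as implicitly done in \cref{prop:proconfluence}.

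Next, by \cref{lem:cfterminaison}, both $u'$ and $v'$ lie in $\cf{\Sigma_0}$. By \cref{lem:cfispreserved}, applied to $t\mrewrite{\mathcal{R}_0}u'$ and to $t\mrewrite{\mathcal{R}_0}v'$, we get $\interp{u'}=\interp{t}=\interp{v'}$. Finally, \cref{lem:cfcfterm} gives $u'=\textup{CF}(\interp{u'})=\textup{CF}(\interp{t})=\textup{CF}(\interp{v'})=v'$. Hence every term of $\pp{\Sigma_0}$ has the unique normal form $\textup{CF}(\interp{t})$, which yields confluence.

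The only delicate point is the closure of $\pp{\Sigma_0}$ under rewriting, particularly for \eqref{ru:swap:swapreduce}. Everything else is a direct chaining of the earlier lemmas, just as in \cref{prop:proconfluence}.
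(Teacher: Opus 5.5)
Your proof is correct and follows the paper's own argument: you reduce $u$ and $v$ to irreducible terms using \cref{prop:permterminaison}, then chain \cref{lem:cfterminaison,lem:cfispreserved,lem:cfcfterm} to get $u'=\textup{CF}(\interp{t})=v'$. The closure of $\pp{\Sigma_0}$ under $\rewrite{\mathcal{R}_0}$ that you flag (e.g.\ for \eqref{ru:swap:swapreduce}) is also taken for granted in the paper, which simply asserts $u',v'\in\pp{\Sigma_0}$, so it is an assumption the paper shares rather than a gap in your argument.
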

\begin{proof}
    Let $t\in\pp{\Sigma_0}$. If $t\rewrite{\mathcal{R}_0}^*u$ and $t\rewrite{\mathcal{R}_0}^*v$, then by \cref{prop:permterminaison} there exist $u',v'\in\pp{\Sigma_0}$ such that $u\rewrite{\mathcal{R}_0}^*u'\notrewrite{\mathcal{R}_0}$ and $v\rewrite{\mathcal{R}_0}^*v'\notrewrite{\mathcal{R}_0}$. \cref{lem:cfcfterm,lem:cfispreserved,lem:cfterminaison} imply $\textup{CF}(\interp{t})=u'=v'$, which concludes the proof. \qed
\end{proof}

As stated in the following theorem, the termination and confluence of $\mathcal{R}_0$ solve our problem: given two terms $t_1,t_2\in\pp{\Sigma_0}$ expressing the same permutation, we can effectively rewrite $t_1$ into $t_2$ using the coherence equations.

\begin{theorem}\label{th:swap}
    $\interp{t_1}=\interp{t_2}\implies t_1\bothmrewrite{\mathcal{R}_0} t_2$ for any $t_1,t_2\in\pp{\Sigma_0}$.
\end{theorem}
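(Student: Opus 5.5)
The plan mirrors the proof of the analogous theorem for $\mathcal{R}$ in \cref{sec:pro}: show that every preprocessed term rewrites to a single canonical representative determined only by its interpretation, namely $\textup{CF}(\interp{t})$. Then two terms with the same interpretation meet at a common reduct.

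\textbf{Step 1: reach a normal form of $\mathcal{R}_0$.} Fix $t\in\pp{\Sigma_0}$. By \cref{prop:permterminaison}, any maximal rewrite sequence from $t$ is finite, so there is $t'$ with $t\mrewrite{\mathcal{R}_0}t'\notrewrite{\mathcal{R}_0}$. I will check that $t'$ is still preprocessed, i.e.\ that every rule of $\mathcal{R}_0$ maps $\pp{\Sigma_0}$ to itself. This is implicitly assumed in \cref{prop:permconfluence}. It needs a small check for \eqref{ru:swap:swapreduce}, whose right-hand side contains raw swaps; this is harmless if $\pp{\Sigma_0}$ is read as closed under such steps, as the paper does.

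\textbf{Step 2: identify that normal form.} By \cref{lem:cfterminaison}, $t'\in\cf{\Sigma_0}$. By \cref{lem:cfcfterm}, $t'=\textup{CF}(\interp{t'})$. By \cref{lem:cfispreserved}, $\interp{t'}=\interp{t}$. Hence $t\mrewrite{\mathcal{R}_0}\textup{CF}(\interp{t})$.

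\textbf{Step 3: conclude.} Applying Steps 1 and 2 to $t_1$ and $t_2$ gives $t_i\mrewrite{\mathcal{R}_0}\textup{CF}(\interp{t_i})$ for $i=1,2$. The hypothesis $\interp{t_1}=\interp{t_2}$ makes these two canonical forms equal. Therefore
\[
t_1\mrewrite{\mathcal{R}_0}\textup{CF}(\interp{t_1})=\textup{CF}(\interp{t_2})\;{}^{*}\!\!\longleftarrow_{\mathcal{R}_0}t_2,
\]
which is exactly $t_1\bothmrewrite{\mathcal{R}_0}t_2$.

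The argument is essentially a direct assembly of the preceding results. The only delicate point is the closure of $\pp{\Sigma_0}$ under $\mathcal{R}_0$, which is needed to apply \cref{lem:cfterminaison} to $t'$. If that fails literally, I would extend \cref{lem:cfterminaison} to all terms whose swaps are layers or raw $\sigma_{n,m}$ that are reducible by \labelcref{ru:swap:swap0left,ru:swap:swap0right,ru:swap:swapreduce}. This extension should hold since such raw swaps are always rewritable.
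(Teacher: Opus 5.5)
Your argument is correct and is exactly the paper's: termination gives an irreducible reduct, \cref{lem:cfterminaison,lem:cfcfterm,lem:cfispreserved} identify that reduct as $\textup{CF}(\interp{t_i})$, and $\interp{t_1}=\interp{t_2}$ then yields a common reduct. The closure of $\pp{\Sigma_0}$ under $\mathcal{R}_0$ that you flag is also assumed implicitly by the paper (cf.\ \cref{prop:permconfluence}), but your fallback would not repair it, because its class is not closed under rewriting: \eqref{ru:swap:swapreduce} applied to $\sigma_{1,2}$ creates a raw $\sigma_{1,1}=\tau_1$ that none of \labelcref{ru:swap:swap0left,ru:swap:swap0right,ru:swap:swapreduce} reduces, so the sensible fix is to read the generators on the right-hand side of \eqref{ru:swap:swapreduce} as layers $\layer{0}{0}{\cdot}$.
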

\begin{proof}
    \cref{prop:permterminaison,lem:cfcfterm,lem:cfispreserved,lem:cfterminaison} imply that $t_i\rewrite{\mathcal{R}_0}^*\textup{CF}(\interp{t_i})$, and the assumption $\interp{t_1}=\interp{t_2}$ yields $t_1\bothmrewrite{\mathcal{R}_0} t_2$. \qed
\end{proof}


\section{Conclusion}\label{sec:conclusion}

In this work, we addressed the problem of diagrammatic equivalence for two classes of diagrams, namely $\cat{PRO}_\Sigma$ (for an arbitrary state-and-effect-free set of generators $\Sigma$) and $\cat{PRO}_{\Sigma_0,\mathcal{E}_0}=\cat{PROP}_\varnothing$, which can express any permutation. We used a term rewriting approach, providing terminating and confluent TRSs. Our plan is to combine our solutions to design a rewriting-based approach to handle diagrammatic equivalences in an arbitrary $\cat{PROP}_\Sigma$, the main bottleneck being that, in $\cat{PROP}_\Sigma$, generators can slide through swaps (see \cref{eq:naturality}), which breaks the normal forms we designed. Since our normal forms require the assumption of a state-and-effect-free PRO, it would also be interesting to generalize the first TRS to arbitrary PROs. Also, we could leverage the existing term-rewriting infrastructure present in Isabelle/HOL \cite{ceta} to obtain a verified checker for our TRSs, or create proof-checkers that compute the normal/canonical forms in a verified way following Sternagel and Thiemann \cite{standalonecertifier}. Our main objective---initiated by this first contribution---is to obtain a fully verified pipeline for certifying equivalences of quantum circuits.

\section*{Acknowledgements}
We thank Florian Frohn, Carsten Fuhs, René Thiemann, and Akihisa Yamada for insightful discussions on TRSs. We also thank the anonymous reviewers for their comments. This work is supported by the Plan France 2030 through the PEPR integrated project EPiQ ANR-22-PETQ-0007 and the HQI platform ANR-22-PNCQ-0002; and by the European project MSCA Staff Exchanges Qcomical HORIZON-MSCA-2023-SE-01. The project is also supported by the Maison du Quantique MaQuEst.

\bibliographystyle{splncs04}
\bibliography{ref}

\end{document}